\theoremstyle{plain}
\newtheorem{prop}{Proposition}
\newtheorem{corollary}{Corollary}
\theoremstyle{definition}
\newtheorem{definition}{Definition}
\theoremstyle{remark}
\newtheorem{remark}{Remark}
\newtheorem{example}{Example}
\newcommand{\R}{\mathbb{R}}
\newcommand{\cM}{{\cal M}}
\newcommand{\off}[1]{}
\long\def\symbolfootnote[#1]#2{\begingroup%
\def\thefootnote{\fnsymbol{footnote}}\footnotetext[#1]{#2}\endgroup}
\def\ps@headings{%
\def\@oddhead{\mbox{}\scriptsize\rightmark \hfil \thepage}%
\def\@evenhead{\scriptsize\thepage \hfil \leftmark\mbox{}}%
\def\@oddfoot{}%
\def\@evenfoot{}}
\begin{document}
\title{Distributed Inter-Cell Interference Mitigation Via Joint Scheduling and Power Control Under Noise Rise Constraints}

\author{\IEEEauthorblockN{Erez Biton\IEEEauthorrefmark{0},
Asaf Cohen\IEEEauthorrefmark{0},
Guy Reina\IEEEauthorrefmark{0} and Omer Gurewitz\IEEEauthorrefmark{0}  \\
 }
 \IEEEauthorblockA{\IEEEauthorrefmark{0}
Department of Communication Systems Engineering,
Ben Gurion University, \\ (Email: berez, coasaf, guy.reina, gurewitz@bgu.ac.il) }}

\maketitle

\begin{abstract}
Consider the problem of joint uplink scheduling and power allocation. Being inherent to almost any wireless system, this resource allocation problem has received extensive attention. Yet, most common techniques either adopt classical power control, in which mobile stations are received with the same Signal-to-Interference-plus-Noise Ratio, or use centralized schemes, in which base stations coordinate their allocations.

In this work, we suggest a novel scheduling approach in which each base station, besides allocating the time and frequency according to given constraints, also manages its uplink power budget such that the aggregate interference, ``Noise Rise", caused by its subscribers at the neighboring cells is bounded. Our suggested scheme is distributed, requiring neither coordination nor message exchange.

We rigorously define the allocation problem under noise rise constraints, give the optimal solution and derive an efficient iterative algorithm to achieve it. We then discuss a relaxed problem, where the noise rise is constrained separately for each sub-channel or resource unit. While sub-optimal, this view renders the scheduling and power allocation problems separate, yielding an even simpler and more efficient solution, while the essence of the scheme  is kept. Via extensive simulations, we show that the suggested approach increases overall performance dramatically, with the same level of fairness and power consumption.
\end{abstract}

\IEEEpeerreviewmaketitle
\section{Introduction}\label{sec:intro}
The desire to provide integrated broadband services while maintaining \textit{Quality of Service} (\emph{QoS}) guarantees growing interest in scheduled access techniques used in multiple-access protocols for future broadband radio systems. Such schedule-based techniques are utilized to ensure that a transmission, whenever made, is not hindered by any other transmission and is therefore successful. Accordingly, \emph{Orthogonal Frequency-Division Multiple Access} (\emph{OFDMA}) has been widely adopted as the core technology for various broadband wireless data systems, including the next generation cellular systems, 3GPP \emph{Long Term Evolution} (\emph{LTE}), \cite{sesia2011lte}, and IEEE 802.16e/m (WiMAX), \cite{IEEE_802.16e,IEEE_802.16m}. In these systems, the \textit{base station} (\textit{BS}) allocates (schedules) distinct frequency-time chunks among the active \textit{mobile stations} (\textit{MS}) within its cell, both for their downstream (BS to MS) and for their uplink (MS to BS) traffic. In addition to the frequency-time allocation, the BS also determines the uplink transmission power of the preselected (scheduled) MS, a.k.a.\ \emph{uplink power-control}.

Common power-control approaches to the uplink resource allocation problem are either to assign transmission power to the MSs such that all are received at the BS with the same \emph{Signal to Interference-plus-Noise Ratio} (\emph{SINR}) \cite{Zan92,Yates95}, or to allow MSs to transmit at their maximal available power \cite{RBUL09}. Both of these techniques optimize the MS/cell throughput (intra-cell throughput), neglecting the interference injected to neighboring cells (inter-cell interference).

Note that in the more common modes of operations today, duplexing is achieved via either time or frequency domain division. Thus, the uplink and downlink transmissions are separated, and when considering, e.g., uplink inter-cell interference, one has to consider only the interference caused at a given BS by transmitting users in the surrounding cells, and not the transmissions of other BSs close by.

Since OFDMA systems are sensitive to inter-cell interference, the interference from neighboring cells can dramatically decrease the SINR received at the BS, hence reduce the MS throughput. Moreover, without knowing in advance the interference a BS is expected to experience in a transmission, an MS is unable to fine-tune its modulation and coding scheme to the expected SINR at the receiving BS. Accordingly, power control plays a decisive role in providing the desired SINR, not only by controlling the MS received signal strength at its intended BS, but also by controlling the interference caused to neighboring cells. This double role  is challenging, as on the one hand as far as intra-cell throughput is concerned, an MS in the proximity of the BS is expected to have high quality link, hence high throughput, even when transmitting in low power, while a distant MS needs to transmit at much higher power to attain the same throughput, and on the other hand as far as inter-cell interference is concerned, MSs near a BS can transmit at high power since they are not in the proximity of other cells, while distant MSs which can be in the proximity of other cells should not transmit at high power as they can interfere with other (neighboring) BSs.

In order to limit the interference to neighboring cells, 3GPP has approved the use of \textit {Fractional Power Control} (\emph{FPC}) \cite{FPC08}. According to this approach, MSs with higher path-loss, which are expected to be far from their BS, should operate at a lower SINR requirement so that they will generate less interference to neighboring cells. Nonetheless, according to this approach, in order to maintain some notion of fairness, most of the resources should be allocated to far-away MSs which will transmit in lower modulation schemes. Similarly, the IEEE 802.16m power control scheme deducts a fraction of the downlink signal-to-interference ratio (SIR) from the transmission power. By that, it reduces the interference caused by cell-edge MSs. The 3GPP LTE standard suggests a different approach for combating inter-cell interference termed \textit{Inter-cell Interference Coordination} (\emph{ICIC}) (e.g., \cite{ICIC09}). ICIC provides tools for dynamic inter-cell-interference coordination of the scheduling in neighboring cells such that cell-edge MSs in different cells are preferably scheduled in complementary parts of the spectrum. However, ICIC requires coordination between neighboring cells, both in terms of exchanging information regarding subscribers at one cell and their interference level on other neighboring cells, as well as coordination in the resource allocation, which further complicates the scheduling process.

In this work, we introduce a different approach, which controls the inter-cell interference, yet does not require any cross deployment communication or coordination. In our approach, the aggregate uplink inter-cell interference that all MSs in a cell are allowed to induce is bounded. This limited egress interference budget, termed Noise Rise, is treated as an additional limited resource which is allocated to MSs by the BS in conjunction with the ordinary resources (time and frequency), according to some fairness criterion and channel condition. We show that controlling the interference generated by each cell also controls the average interference level sensed by each BS and provides a more predictable uplink SINR, which allows lower interference margins and more efficient rate selection. Hence, it obtains higher capacity and better coverage. In particular, our contributions are as follows.

First, we introduce the Noise Rise concept which bounds the aggregate uplink interference that all MSs in a cell are allowed to interfere with all the surrounding cells. We suggest means for a BS to estimate the normalized interference of each of its MSs, and show that by limiting the egress interference the ingress interference is controllable. We show that by utilizing the Noise Rise concept we can solve the joint scheduling problem between all BSs in the network distributively, by each BS independently from even neighboring BSs.

Second, we formalize the scheduling problem under the noise rise constraint as a convex constrained optimization problem, and provide an efficient iterative algorithm that is proved to solve it optimally. Moreover, we suggest a second setting, in which instead of bounding the average noise rise over all channels, allowing some sub-channels to contribute more noise rise at the expense of further limiting the noise rise on others, we bound the noise rise on each sub-channel to the exact same value. The latter setting allows the decoupling of the scheduling algorithm from the power control and thus facilitates an even simpler algorithm.

Third, we thoroughly evaluate the noise rise concept via an extensive set of simulations, using both an all-inclusive simulator as defined by IMT-Advanced~\cite{ITUM2135} and numerical results for the exact expressions we analyze utilizing the Shannon-capacity based approach. Our numerical results clearly depict that the suggested approach dramatically increases the overall throughput achieved in each cell compared to the traditional approach, while maintaining fairness. The results obtained by the IMT-Advanced simulator include a more realistic setup which takes into account modulation, coding and several other practical aspects, and show that even though MSs in the proximity of the BS (hence can take advantage of transmitting in high power and high modulation rates), lose throughput due to the noise rise constraint, MSs further away from the BS, and in particular those closer to the cell edge, gain dramatically due to the noise rise constraint.

\section{Noise Rise} \label{sec:noiseRise}
In this paper we study the joint uplink scheduling and power control problem for wireless cellular networks. We consider a multi-cell network comprising $\mathcal{B}$ Base Stations (BSs) each serving a set of Mobile Stations (MSs). We denote by $\mathcal{M}_i$ the set of MSs served by BS $i$. The BS deployment is assumed to be fully symmetric, i.e., we assume that the BSs form a two dimensional lattice e.g., hexagonal grid. We assume that the number of MSs ($\mathcal{M}_i$) and their distribution over each cell is \emph{i.i.d}.

Interference is one of the key factors impacting the performance of wireless networks. It can be partitioned into external interference, which is caused by the coexistence of other wireless networks that operate on the same frequency bands and to internal interference, which is caused by other transmissions within the same network. In cellular networks, internal interference can be further partitioned into intra-cell interference, which relates to other transmissions within the same cell, and inter-cell interference, which relates to interference from neighboring cells. Traditional cellular communications such as CDMA networks suffer from Intra-cell interference due to the pseudo orthogonality of the CDMA codes used within a cell. Such intra-cell interference is resolved in OFDMA systems due to the orthogonality characteristics of the subcarriers in these systems.  Nonetheless, OFDMA technology does not provide any solution to the inter-cell interference. Particularly, the received SINR on one cell is highly dependent on neighboring cell transmissions, i.e., the received SINR on a specific resource unit (here we denote by resource unit the smallest time-frequency resource allocation used for downlink/uplink transmission) highly depends on the scheduled MSs on the same resource unit and their associated transmission powers in the neighborhood cells. Consequently, when each cell performs its own power control and scheduling independent of its neighboring cells (no coordination between the cells), the interference profile seen in uplink transmission becomes highly dynamic, i.e., the interference level at a BS can vary considerably; such variability is very detrimental to the transmission rate and coding scheme selection. Accordingly, in order to maintain low packet drop ratio (PDR) despite this variability, a high interference margin should be accounted for while selecting the transmission rate and coding scheme. That is, transmitting in a more robust rate (lower rate) to maintain reliability at worse cases of the interference pattern.

Clearly, this inter-cell interference effect reduces cells' uplink spectral efficiency, resulting in poor resource utilization and performance. In order to combat co-channel interference (CCI), several studies addressed the problem both for setups in which a single antenna is utilized and for the case in which multiple antennas are utilized. In the context of Single Antenna Interference Cancelation (SAIC) for which most attention was drawn toward GSM networks, the most prominent classes of algorithms suggested are joint demodulation and interference cancelation, e.g. \cite{austin03_SAIC, Nickel2005_SAIC, meyer2006_SAIC}. In the context of Multiple Antenna Interference Cancelation (MAIC) most attention was drawn to intra-cell downlink Multi-User multiple-input multiple-output (MU-MIMO) wireless systems, and was aimed at suppressing the co-channel interference (CCI), by designing multi-user transmit beamforming (or precoding) vectors or matrices which optimize the signal-to-leakage ratio (SLNR), e.g., \cite{tarighat_ICASSP05, zhang2005linear, sadek2007_SLNR, sadek2011_leakage}. Note, however, that the above setting, as well as the resulting optimization problems, are different from the ones we consider in this paper.

In addition to the inter-cell interference effect on cell capacity, inter-cell interference can be also vital for cell coverage, i.e., if noise plus interference is not constrained, due to the maximal power transmission of an MS, certain MSs (typically at the cell edge) will not be able to maintain a reasonable communication capacity. These MSs will be blocked and removed from the coverage area, resulting in reduced cell coverage.

The key concept we wish to consider in this paper is the \emph{uplink noise-rise}.
\begin{definition}
Uplink \textit{noise rise} is defined as the total uplink received noise plus interference power over the background noise power. Formally, let $N_0$ denote the background noise level and $I$ denote the total interference at the BS (receiver). Then, the noise rise $\gamma$ is defined by $\gamma=\frac{N_0+I}{N_0}$.
\end{definition}

In the following subsection we present the constrained noise rise approach, which provides a fully distributed mechanism (without cross deployment coordination and synchronization) that dramatically reduces the variability of the noise rise (due to inter-cell interference) allowing more aggressive rate selection (taking a much smaller noise rise margin).

\subsection{Constrained noise rise approach}
The constrained noise rise approach aims at allowing the scheduler to operate in fixed Noise plus Interference conditions and facilitating a more aggressive rate selection. Specifically, we aim at bounding the egress interference and show that it also bounds the noise rise seen by uplink transmission at all cells.
Formally, let $\mathcal{B}$ denote the set of all BSs in the network, and $\mathcal{M}(k), \; k \in \mathcal{B}$ denote the set of backlogged MSs at BS $k$.
Denote by $I_{in}({k^*})$ the ingress interference level per resource unit at BS $k^*$ receiver,
\begin{equation}\label{eq:interfin}
    I_{in}({k^*})= \sum_{k\in \mathcal{B}\setminus k^*} {\sum_{i(k)\in \mathcal{M}(k)} { L_{i(k),k^*}\cdot p_{i(k)}}}
\end{equation}
where $L_{i(k),k^*}$ denotes the channel gain between MS $i(k)$ of BS $k$ and BS $k^*$ and $p_{i(k)}$ is the transmission power of MS $i(k)$. Note that in order to compute the aggregate interference experienced by BS $k^*$ one has to sum the interference induced to BS $k^*$ by each and every transmission by any MS in any cell in the network other than the MSs in cell $k^*$. Accordingly, the first summation in (\ref{eq:interfin}) is over all BSs besides BS $k^*$ itself, and the second summation is over all scheduled MSs (transmitters) in each such cell.

On the other hand, the egress interference by the MSs of cell $k^*$, which is the aggregate interference induced by each transmitting MS on cell $k^*$ on all BSs other than BS $K^*$ itself, is given by:
\begin{equation} \label{eq:interfeg}
    I_{eg}({k^*})=\sum_{i(k^*)\in \mathcal{M}(k^*)} \sum_{k\in \mathcal{B}\setminus k^*} { L_{i(k^*),k}\cdot p_{i(k^*)}}
\end{equation}
where the first sum is over all MSs in BS $k^*$, and the second sum is the aggregate interference induced by each such transmission on all BSs other than $k^*$ to which the transmitter belongs.

We assume a fully homogeneous deployment which implies that (i) the topology seen by each BS (i.e., the number of neighboring BSs and their location) is \emph{i.i.d} (ii) the backlogged MS distribution in different cells is also \emph{i.i.d}, given that all BSs deploy the same scheduling strategy (e.g., power control and link adaptation mechanisms). Note that both assumptions relate to the distribution and do not require identical spreading of MSs within each cell. We now show that in such a deployment, if the average egress interference caused by the MSs in each cell to the surrounding BSs is the same for all cells, then the average ingress interference experienced by each BS due to its neighboring cells is fixed and equal to the aforementioned average egress interference. That is,

\begin{eqnarray}\label{eq:interfineg}
    {\rm E} \left[ I_{in}({k^*})\right]
    & & = {\rm E} \left[ \sum_{k\in \mathcal{B}\setminus k^*} {\sum_{i(k)\in \mathcal{M}(k)} { L_{i(k),k^*}\cdot p_{i(k)}}} \right] \nonumber \\
    & & = \sum_{k\in \mathcal{B}\setminus k^*}{\rm E} \left[ \sum_{i(k)\in \mathcal{M}(k)} { L_{i(k),k^*}\cdot p_{i(k)}} \right]   \nonumber  \\
     & &  \stackrel{\textcolor{red}{A}}{=}  \sum_{k\in \mathcal{B}\setminus k^*} {\rm E} \left[ \sum_{i(k^*)\in \mathcal{M}(k^*)} { L_{i(k^*),k}\cdot p_{i(k^*)}} \right] \nonumber  \\
      & & =  {\rm E} \left[ \sum_{i(k^*)\in \mathcal{M}(k^*)} \sum_{k\in \mathcal{B}\setminus k^*} { L_{i(k^*),k}\cdot p_{i(k^*)}} \right]  \nonumber  \\
      & & = {\rm E} \left[ I_{eg}({k^*}) \right],
\end{eqnarray}

where \textcolor{red}{A} relies on the homogeneity assumptions above, that is, instead of averaging the interference induced by MS transmissions from cell $k$ on BS $k^*$, we average the interference induced by MS transmissions from cell $k^*$ on BS $k$.

Let $l_{i(k)}$ denote the normalized interference (interference per unit power) that MS $i$ of BS $k$ uplink transmission injects to its neighboring cells, that is
\[
{l_{i(k)}} = \sum\limits_{k' \in \mathcal{B} \setminus k} {L_{i\left( {k} \right),k'}}.
\]
When the BS identity (index) is clear from the context, we will omit it and simply denote it by $l_i$.
The aggregate interference caused by all MSs of cell $k^*$,$I_{eg}({k^*})$ , can be written as
\[
I_{eg}({k^*})=\sum_{i(k^*)\in \mathcal{M}(k^*)} {l_{i(k^*)} \cdot p_{i(k^*)}}.
\]
As previously shown, in order to control the average ingress interference and keep its variability low it is sufficient to bound the egress interference caused by each cell uplink transmission. Accordingly, the noise rise constraint on the scheduler and power control allocation we offer to impose is:
\begin{equation} \label{eq:noiserisebound}
    \sum_{i({k^*})\in \mathcal{M}({k^*})} {l_{i(k^*)}\cdot p_{i(k^*)}} \leq I
\end{equation}
where $I$ is the pre-defined, fixed, noise rise constraint.

We note that (\ref{eq:noiserisebound}) implies that the scheduler should consider noise rise as a resource to be allocated to transmitting MSs in the same manner as resource units. Accordingly, each BS has a noise rise budget $I$ which it can distribute between the scheduled MSs and with respect to each MS's egress interference $l_i$. For example, if the egress interference of a given MS is high, the scheduler could reduce its transmission power and increase its allocated bandwidth. Alternatively, the scheduler could allocate a large portion of its noise rise budget on this MS and co-allocate it with another MS that "consumes" less noise-rise such that the noise rise budget is kept.

In the following we provide the means for a BS to estimate the normalized interference.

\subsection{Normalized interference via downlink SINR}
To comply with (\ref{eq:noiserisebound}), each BS $k \in \mathcal{B}$ should have information on the normalized interference $l_{i(k)}$ of its MSs. Typically, a BS cannot directly measure this coefficient. By coordinating between pilot transmissions of different MSs across the network, each MS's normalized interference can be measured by the surrounding cells and reported to the BS (e.g., via Inter Cell Interference Coordination). This, however, requires coordination among the BSs. Such coordination is supported by the 4G technologies. Alternatively, in the case that the network operates on Time Domain Duplexing (TDD) mode, in which uplink and downlink transmissions are separated in time domain each utilizing the entire spectrum in its turn, a BS may estimate the normalized interference from its MSs' downlink channel state reports without the need for inter-cell communication or coordination.

The downlink SIR (signal-to-interference ratio) measured by an MS is given by
\begin{equation}
    SIR^{DL}_{i\left( {k^*} \right)} = \frac {L_{{k^*},i\left( {{k^*}} \right)}{P^{DL}}}{\sum\limits_{k \in \mathcal{B}\setminus k^*} {L_{k, i\left( {k^*} \right)}{P^{DL}}}}
\end{equation}
where $P^{DL}$ is the BS downlink transmission power.
With the channel reciprocity \cite{Smit04}, i.e., $L_{k,i}=L_{i,k}$, we have
\begin{equation}
    SIR^{DL}_{i\left( {k^*} \right)} = \frac{L_{i\left( {k^*} \right),{k^*}}}{\sum\limits_{k \in \mathcal{B}\setminus k^*} {L_{i\left( {k^*} \right),k}} }
\end{equation}
Therefore,
\begin{equation} \label{eq:li_estimate}
    {l_i} = \sum\limits_{k \in \mathcal{B}\setminus k^*} {L_{i\left( {k^*} \right),k}}  = \frac{L_{i\left( {k^*} \right),{k^*}}}{SIR^{DL}_{i\left( {{k^*}} \right)}}
\end{equation}
Finally, we note that in an interference-limited scenario, when the interference is \emph{considerably greater} than the background noise (which is typically the case), one could use the measured downlink SINR instead of the downlink SIR (required in equation \eqref{eq:li_estimate}), neglecting effect of the background noise on the measurement.

\section{System Model} \label{sec:model}

As shown in Section~\ref{sec:noiseRise}, by limiting the egress interference caused by \emph{each cell's} uplink transmission, independently between cells, we control \emph{each cell's} ingress interference and keep its variability low. Accordingly, we focus on a single cell, and consider the problem of resource allocation for the uplink of an OFDMA cell, where a set $\mathcal{M} = \left\{1,...,M\right\}$ of backlogged MSs transmits to the same BS.

Time is divided into equal length time slots (sub-frames according to the IEEE 802.16m terminology). According to OFDMA, each time slot is divided into a set $\mathcal{N} = \left\{1,...,N\right\}$ of basic (time-frequency) logical allocation units termed \textit{resource units}. We distinguish between logical resource units and physical resource units. Specifically, as widely used in WiMAX, we assume that each logical resource unit is mapped to the same size physical resource unit that has undergone partitioning and permutations. The permutation spreads the (logical) resource units' sub-carriers across the whole frequency band. Alternatively, for the LTE Single Carrier Frequency Division Multiple Access (SC-FDMA) we assume the distributed transmissions, where the users occupy different sets of subcarriers. Accordingly, for each MS $i$ the channel quality of all logical resource units are assumed to be \emph{i.i.d}.
Let us denote by $x_i (t)$ the fraction of the frequency band allocated to MS $i$ at time slot $t$, such that $\sum_{i \in \mathcal{M}} {x_i(t)} \leq 1$. Even though as previously explained the frequency band is divided to a fixed number of channels, throughout the analytical part of this paper, we assume that the frequency band is infinitely divisible, i.e., we allow $x_i(t)$ to take any value between zero and one ($0 \le x_i (t)\le 1$). Note that typically, the number of resource units in a frequency band is large (e.g., in IEEE 802.16m, 48 and 92 resource units for a 10MHz and 20MHz bands, respectively). Obviously, the rounding error is a function of the number of MSs allocated at each time slot, i.e., if only few users are scheduled in a time-slot then the error is expected to be small as only a small subset of the resource units are rounded, and if the number of users scheduled in a time slot is high the error is expected to be high. In the simulation part of the paper we examine our results over both the continuous and the quantized allocation setup, and show that since typically the number of users scheduled in each time slot is low, the rounding errors arising from changing continuous frequency allocations into quantized ones does not affect the results dramatically.

The capacity of a band-limited Gaussian channel (considering other interfering signals as noise \cite[Chapter 15]{cover2006elements}) is $W \log(1+\frac{P}{(N_0+I) W})$, where $W$ is the bandwidth in Hz, $P$ is the received power in Watts and $N_0+I$ is the noise plus interference spectral density in Watts/Hz. Now, denote the total bandwidth available by $B$. Accordingly, the bandwidth allocated to MS $i$ is $W= Bx_i(t)$. Further denote by $p_i(t)$ the power allocated to MS $i$ at time $t$ and by $L_i(t)$ the path gain between MS $i$ and its BS (i.e., $P=p_i L_i(t)$).
Note, the scheduler is assumed to have knowledge of the channel gain, comprising the long term parameters of the link between the BS and the MS, such as path loss and shadowing factor, as well as the short-term time-varying spatial fading.

Thus, the capacity of MS $i$ at time $t$, denoted $r_i(t)$, can be written as
\[
r_i(t)= B x_i(t) \log\left(1+\frac{ p_i(t) L_i(t)}{\left(N_0+I(t)\right) B x_i(t) }\right).
\]
Note that based on Section~\ref{sec:noiseRise}, keeping a fixed noise-rise (i.e., fixed egress interference) by each cell ensures that  the Noise plus Interference  has small variance. Hence, the Noise plus Interference is assumed fixed over time.
 Next, for MS $i$ at time $t$, let us denote by $e_i(t)$ the \emph{normalized} received SINR, that is, $e_i(t) = \frac{L_i(t)}{(N_0+I) B}$. Hence,
\begin{equation} \label{eq:rate}
r_i(t) = B x_i (t) \log \left( 1+ \frac {p_i (t) e_i(t) } {x_i (t)} \right).
\end{equation}
Note that a similar formulation was also used in [6, Section III], when the \emph{single-cell} resource allocation problem was discussed (disregarding inter-cell interference).

Typically, $p_i(t)$ is constrained by the maximum power a user can transmit, i.e., $p_i (t)< P_i$. Nonetheless, throughout the theoretical part of this paper we will assume an interference limited scenario, that is, the maximal power a user can transmit with is higher than the maximal power limit resulting from the noise-rise constraint, even if it was the only transmitter, i.e., $P_i \geq \frac{I}{l_i(t)}$, where, $l_i(t)$ is the normalized interference of MS $i$ at time slot $t$. In other words, the desire not to inflict high interference on the neighboring cells is the actual power limit. We assume that $l_i(t)$ is estimated by the BS according to (\ref{eq:li_estimate}), i.e., $l_i(t)=\frac{L_i(t)}{SIR^{DL}(t)}$ where the downlink Signal-to-Interference ($SIR^{DL}(t)$) is available to the BS (scheduler) via the \textit{Channel Quality Indicator} (\textit{CQI}).
In the simulation part of the paper we will also examine the scenario in which the max-power constraint can be lower than the one which is due to the noise-rise constraint.

Throughout this paper we only consider the uplink allocation of resources (fraction of the frequency band and power) to the backlogged MSs. We assume that the scheduler objective is to maximize a weighted sum throughput. Accordingly, at the beginning of each time slot, the scheduler seeks to maximize a (time-varying) weighted sum of the MS rates. We adopt the gradient-based scheduling framework \cite{KeMa98,AgSu02,Sto05,KuWh02,AgBeLASu01}. Specifically, the scheduler solves the following optimization

 \begin{equation} \label{eq:grbasedsched3}
\mathop {\max } \sum\limits_{i \in \mathcal{M}} {{\omega _{i}(t)}{r_{i}(t)}}
\end{equation} 	
where $\omega_{i}(t) \geq 0$ is a time-varying QoS weight assigned to the $i$-'th MS at time $t$.

We concentrate on weights that depend on the average throughput attained by each MS up to the $t$-th slot, and capture some fairness notation. For example, $\omega_{i}(t) = \frac{1}{T_i(t)}$, where $T_i(t)$ is the average throughput of MS $i$ at time $t$, which captures proportional fairness \cite{KeMa98,KuWh02,AgBeLASu01}.

Note that (\ref{eq:grbasedsched3}) must be re-solved at each scheduling instant (e.g., each sub-frame) due to of changes in both the resource unit state and the weights. Consequently, for the ease of presentation, in the following we omit the time index $t$.

\section{Optimal Joint uplink scheduling and power control} \label{sec:optimization}
In this section, we consider the optimal solution to the general problem of joint scheduling and power control. We formalize the optimization problem, characterize the optimal solution and give an efficient iterative algorithm to achieve it.

The problem at hand is as follows. At the beginning of each time slot, the BS schedules a subset of backlogged MSs to available resource units and assigns transmission power to each scheduled MS. The BS aims at maximizing the achievable rate while providing MSs with a fair share of resources according to a predefined fairness metric and maintaining a bounded interference with neighboring cells.

In conjunction with the scheduler and the power control, a rate adaptation mechanism adjusts the transmission rate according to the allocated power. Clearly, the resulting throughput in a time slot (sub-frame) for a given scheduled MS is derived from the allocated resource units and the allocated power. In other words, the scheduling and power allocation are coupled, in devising the allocated throughput, and should be performed jointly. The power control scheme optimizes the tradeoff between allocated rate and contribution to the overall Noise Rise. Typically, MSs far from the BS are required to transmit at high power in order to maintain a reasonable rate. However, these MSs are closer to neighboring cells, hence contribute more to the noise rise in those cells. Nonetheless, cell edge MSs are required to transmit at low power in order to bound the interference with neighboring cells.

To state the optimization problem formally, at the beginning of each time slot the scheduling and power control scheme selects a feasible resource and power allocation tuple $\left(\boldsymbol{x}, \boldsymbol{p}\right)$ (throughout, we use bold symbols to denote vectors) that complies with the noise rise constraints and maximizes a time-varying weight assigned to each MS, i.e.,
\[
\underset{\boldsymbol{x}, \boldsymbol{p}}{\text{maximize}} \left\{ \sum_{i \in \mathcal{M}} \omega_i \cdot r_i(x_i,p_i)\right\}
\]
where $r_i(x_i,p_i)$ is the rate related to the resource and power allocation and $\omega_i \geq 0$ is the time-varying weight assigned to the $i$-th MS at the beginning of the time slot. These weights are the gradient of an increasing concave utility function of each MS. Taking the rate as $r_i(x_i,p_i)={Bx_i \log \left( 1+ \frac {p_i e_i} {x_i} \right)}$ (equation \eqref{eq:rate}), we can formulate the joint power control and scheduling with noise rise constraint optimization problem: for each time slot, find the channel allocated to each MS, denoted by $\boldsymbol{x}= \{x_{1},x_{2},\ldots ,x_{M}\}$, as well as the power assigned to each MS, denoted by $\boldsymbol{p}= \{p_{1},p_{2},\ldots ,p_{M}\}$, such that the total weighted throughput is maximized. That is,
{
\begin{equation} \label{eq:optimization_problem}
\begin{aligned}
& \underset{\boldsymbol{x} , \boldsymbol{p}}{\text{maximize}}
& & \sum_{i \in \mathcal{M}} B\omega_i x_i \log \left( 1+\frac{p_i e_i}{x_i} \right)  \\
& \text{subject to}
& & x_i,p_i \geq 0, \; \forall i \in \mathcal{M}, \\
&&& \sum_{i \in \mathcal{M}} x_i \leq 1,\\
&&& \sum_{i \in \mathcal{M}} l_i  p_i \leq I.
\end{aligned}
\end{equation}
}

\begin{prop} \label{pr:convexity}
    Optimization problem (\ref{eq:optimization_problem}) is convex with linear constraints.
\end{prop}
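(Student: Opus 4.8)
The plan is to split the statement into two independent claims: that the constraints are affine and that the objective (which is being maximized) is concave; together these are exactly the assertion that \eqref{eq:optimization_problem} is a convex program with linear constraints. The constraint part I would dispatch immediately: $x_i\ge 0$, $p_i\ge 0$, $\sum_{i}x_i\le 1$ and $\sum_i l_i p_i\le I$ are linear inequalities in the variables $(\boldsymbol{x},\boldsymbol{p})$, since the $l_i$ and $I$ are fixed constants, so the feasible set is a polyhedron, in particular convex.

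The work is in showing the objective $\sum_{i\in\mathcal{M}} B\omega_i\, x_i\log(1+p_i e_i/x_i)$ is concave on the feasible region. Since $B>0$ and each $\omega_i\ge 0$, and a nonnegative combination of concave functions is concave, and since the $i$-th summand depends only on the pair $(x_i,p_i)$, it suffices to prove that
\[
f(x,p)=x\log\!\left(1+\frac{e\,p}{x}\right)
\]
is concave on $\{x>0,\ p\ge 0\}$ for any fixed constant $e>0$, with $f(0,p)$ read as its limit $0$ (consistent with an MS allocated no bandwidth earning no rate).

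The cleanest route is to recognize $f$ as a perspective function: setting $g(u)=\log(1+eu)$, which is concave on $u\ge 0$ since $g''(u)=-e^{2}/(1+eu)^{2}<0$, we have $f(x,p)=x\,g(p/x)$, the perspective of $g$; and the perspective of a concave function is concave (on the half-space $x>0$, with concavity passing to the closure by continuity), so $f$ is concave. If one prefers a self-contained argument, I would instead compute the $2\times2$ Hessian of $f$ directly; it comes out as $-c\,(p,-x)^{\top}(p,-x)$ with $c=e^{2}/\bigl(x^{3}(1+ep/x)^{2}\bigr)>0$, i.e.\ a nonpositive multiple of a rank-one positive semidefinite matrix, hence negative semidefinite. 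Either way each summand, and therefore the objective, is concave, and maximizing a concave objective over a polyhedron is a convex program, which is what the proposition claims.

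The only non-routine ingredient here is the concavity of $f$; everything else is bookkeeping. The main obstacle, such as it is, is simply choosing whether to invoke the perspective-function fact or to grind out the Hessian, plus the minor care at $x_i=0$, which the continuous extension above handles so that concavity holds on the entire closed feasible set.
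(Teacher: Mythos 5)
Your proof is correct, and it contains the paper's argument as its ``self-contained'' branch: the Hessian you compute, $-\frac{e^2}{x(1+ep/x)^2}\,(p/x,-1)^{\top}(p/x,-1)$ up to the harmless rescaling of the rank-one vector, is exactly the matrix the paper exhibits for the negated summand, and recognizing it as a nonpositive multiple of a rank-one PSD matrix is the same observation the paper makes by writing the quadratic form as the perfect square $\bigl(\alpha_1 p_i/x_i-\alpha_2\bigr)^2$. Where you genuinely depart is your preferred route: identifying $x\log(1+ep/x)$ as the perspective of the concave function $g(u)=\log(1+eu)$ and invoking the standard fact that the perspective operation preserves concavity. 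That argument buys brevity and avoids second-derivative bookkeeping entirely, and it generalizes immediately to any concave rate function in place of $\log(1+\cdot)$; the paper's Hessian computation buys self-containedness and, as a side benefit, is reused later in the paper (the proofs of Proposition~\ref{prop:optimal_analytic} and Corollary~\ref{cor:convergence} lean on the explicit Hessian and on per-variable convexity), so the explicit matrix is not wasted effort there. You are also slightly more careful than the paper on two routine points it leaves implicit: you state why the constraints are affine, and you handle the boundary $x_i=0$ by continuous extension so that concavity holds on the closed feasible set.
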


\begin{proof}
Consider the negative utility function
\begin{equation}\label{eq:negative_utility}
-\sum_{i \in \mathcal{M}} \omega_i x_i \log \left( 1+\frac{p_i e_i}{x_i} \right)
\end{equation}
We first show that this function is convex. To this end, consider a single summand
\[
f(x_i,p_i) =-\omega_i x_i \log \left( 1+\frac{p_i e_i}{x_i} \right).
\]
The Hessian matrix, restricted to the variables $x_i$ and $p_i$ is given by
\begin{displaymath}
\left( \begin{array}{cc}
\frac{\partial f}{\partial x_i^2} & \frac{\partial f}{\partial x_i p_i} \\
\frac{\partial f}{\partial p_i x_i} & \frac{\partial f}{\partial p_i^2}
\end{array} \right) =
\frac{\omega_{i} e_i^2}{(1+\frac{e_ip_i}{x_i})^2 x_i}
\left( \begin{array}{cc}
\frac{p_i^2}{x_i^2} & -\frac{p_i}{x_i} \\
-\frac{p_i}{x_i} & 1
\end{array} \right)
\end{displaymath}
Note that,
\[
\left( \begin{array}{cc}
\alpha_1  &
\alpha_2
\end{array} \right)\left( \begin{array}{cc}
\frac{p_i^2}{x_i^2} & -\frac{p_i}{x_i} \\
-\frac{p_i}{x_i} & 1
\end{array} \right)\left( \begin{array}{cc}
\alpha_1 \\
\alpha_2
\end{array} \right) = \left(\alpha_1\frac{p_i}{x_i}-\alpha_2\right)^2
\]

Hence, for all $(\alpha_1,\ldots,\alpha_{2M}) \in \R^{2M}$, the Hessian matrix of (\ref{eq:negative_utility}), $H(\boldsymbol{x} , \boldsymbol{p})$, satisfies
\begin{equation*}
(\alpha_1,\ldots,\alpha_{2M}) H(\boldsymbol{x} , \boldsymbol{p}) (\alpha_1,\ldots,\alpha_{2M})^T
 = \sum_{i \in \mathcal{M}} \frac{\omega_{i} e_i^2}{(1+\frac{e_ip_i}{x_i})^2 x_i}\left(\alpha_{2i-1}\frac{p_i}{x_i}-\alpha_{2i}\right)^2 \ge 0
\end{equation*}
\end{proof}

When solving \eqref{eq:optimization_problem} for the optimal $\boldsymbol{x}$ and $\boldsymbol{p}$, for fixed $B$ the values of $\{\omega_i\}_{i \in \mathcal{M}}$ and $\{e_i\}_{i \in \mathcal{M}}$ are fixed non-negative reals. Hence, for ease of notation, from this point on, we omit $B$ and focus on the optimization of $\sum_{i \in \mathcal{M}} \omega_i x_i \log \left( 1+\frac{p_i e_i}{x_i} \right)$ for any $\{\omega_i\}_{i \in \mathcal{M}}$ and $\{e_i\}_{i \in \mathcal{M}}$, subject to the constraints. However, note that $e_i$ does depend on the given $B$ and while computing the cell capacity under the resulting schedule one should multiply by the same $B$.

From Proposition \ref{pr:convexity}, it is clear that the optimal solution can be found numerically using standard optimization techniques. For example, it can be found through a similar method to that used in \cite{RBUL09,RBCDMA10}. Yet, this direct approach might be prohibitively complex and unfeasible for practical implementation in commercial BSs. Accordingly, a simpler solution is called for. To this end, we give an efficient iterative algorithm, which uses the \emph{analytical} solutions to two related sub-problems, to solve the above problem. Moreover, we show that the iterative algorithm converges to the global optimum.

\subsection{An Analytic Solution to the Joint Scheduling and Power Control Problem}\label{sec:optimal}
Herein, we show that the optimal solution can be viewed as two \emph{intertwined water-filling-like} problems, facilitating a highly efficient solution which solves the complete optimization problem by fixing a subset of the variables each time (either powers or bandwidth) and solving the resulting water-filling problem. We show that this iterative procedure is bound to converge, and, moreover, give analytical bounds on the possible values of the slack variables in each of the separate water filling problems, enabling us to converge to their solutions using a \emph{fast binary search}.

The optimization problem we discuss is as follows.
\begin{equation} \label{eq:optimization_problem_no_power}
\begin{aligned}
& \underset{\boldsymbol{x} , \boldsymbol{p}}{\text{maximize}}
& & \sum_{i \in \mathcal{M}} \omega_i x_i \log \left( 1+\frac{p_i e_i}{x_i} \right)  \\
& \text{subject to}
& & x_i,p_i \ge 0, \; \forall i \in \mathcal{M}, \\
&&& \sum_{i \in \mathcal{M}} x_i = 1,\\
&&& \sum_{i \in \mathcal{M}} l_i  p_i = I.
\end{aligned}
\end{equation}

Denote $\max\{x,0\}$ by $[x]^+$.
We first consider the analytical solution to this problem. Proposition~\ref{prop:optimal_analytic} below gives a set of equations satisfied by the optimal bandwidth and power allocations.
\begin{prop}\label{prop:optimal_analytic}
Consider the joint power and bandwidth optimization problem in (\ref{eq:optimization_problem_no_power}). The optimal power and bandwidth allocations, $\{p^*_i\}_{i \in \mathcal{M}}$, and $\{x^*_i\}_{i \in \mathcal{M}}$, respectively, satisfy

\begin{equation}
p^*_i = x^*_i\left[\frac{\omega_i}{\lambda_1 l_i}-\frac{1}{e_i} \right]^+ \qquad ; \qquad  x^*_i = [\tilde{x}_i]^+
\end{equation}

where $\tilde{x}_i$, $\lambda_1$, and $\lambda_2$ are the solution to the following set of equations:
\[
\omega_i \log(1+\frac{p^*_i e_i}{\tilde{x}_i}) - \frac{p^*_i e_i \omega_i}{\tilde{x}_i + p^*_i e_i} + \lambda_2 = 0,
\]
\[
\sum_i l_i x^*_i\left[\frac{\omega_i}{\lambda_1 l_i}-\frac{1}{e_i} \right]^+ = I,
\] and
\[
\sum_i x^*_i = 1.
\]
\end{prop}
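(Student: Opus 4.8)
The plan is to solve problem~(\ref{eq:optimization_problem_no_power}) through its Karush--Kuhn--Tucker (KKT) conditions. By Proposition~\ref{pr:convexity} the objective is concave and all constraints are affine, and Slater's condition holds --- e.g.\ $x_i = 1/M$ for all $i$ together with any strictly positive powers satisfying $\sum_{i} l_i p_i < I$ (possible since $I>0$) is strictly feasible --- so the KKT conditions are necessary and sufficient for global optimality, and it suffices to exhibit multipliers for which they hold. I would introduce the Lagrangian
\[
\mathcal{L}(\boldsymbol{x},\boldsymbol{p};\lambda_1,\lambda_2)
= \sum_{i\in\mathcal{M}}\omega_i x_i\log\!\left(1+\frac{p_i e_i}{x_i}\right)
- \lambda_1\Big(\sum_{i\in\mathcal{M}} l_i p_i - I\Big)
+ \lambda_2\Big(\sum_{i\in\mathcal{M}} x_i - 1\Big),
\]
leaving the sign constraints $x_i,p_i\ge 0$ explicit, so that stationarity reads $\partial\mathcal{L}/\partial p_i \le 0$ (with equality when $p_i>0$) and $\partial\mathcal{L}/\partial x_i \le 0$ (with equality when $x_i>0$), the sign of $\lambda_2$ being unrestricted since it attaches to an equality constraint.

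For the power variables, a direct differentiation gives $\partial\mathcal{L}/\partial p_i = \frac{\omega_i e_i x_i}{x_i + p_i e_i} - \lambda_1 l_i$. I would first argue that $\lambda_1 > 0$: the objective is strictly increasing in every $p_i$ (for $x_i>0$), so the noise-rise constraint must be active at the optimum (justifying the equality form of (\ref{eq:optimization_problem_no_power})); and since the optimal value is strictly positive whenever some $\omega_i>0$, at least one such user has $x_i^*>0$, for which $\lambda_1=0$ would force $\partial\mathcal{L}/\partial p_i>0$, contradicting stationarity. With $\lambda_1>0$ I would split on $p_i^*$: if $p_i^*>0$, solving $\partial\mathcal{L}/\partial p_i = 0$ yields $p_i^* = x_i^*\big(\frac{\omega_i}{\lambda_1 l_i}-\frac{1}{e_i}\big)$, and consistency with $p_i^*>0$ forces the bracket positive; if $p_i^*=0$ (in particular whenever $x_i^*=0$, since then positive power only wastes budget), the inequality $\partial\mathcal{L}/\partial p_i\le 0$ collapses to $\frac{\omega_i}{\lambda_1 l_i}-\frac{1}{e_i}\le 0$. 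The two cases merge into $p_i^* = x_i^*\big[\frac{\omega_i}{\lambda_1 l_i}-\frac{1}{e_i}\big]^+$.

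For the bandwidth variables, differentiation gives $\partial\mathcal{L}/\partial x_i = \omega_i\log\!\big(1+\frac{p_i e_i}{x_i}\big) - \frac{\omega_i p_i e_i}{x_i + p_i e_i} + \lambda_2$; whenever $x_i^*>0$ this vanishes, which is exactly the transcendental equation in the statement with $\tilde x_i = x_i^*$, while for $x_i^*=0$ the projection in $x_i^* = [\tilde x_i]^+$ records the complementary-slackness inequality. Finally, primal feasibility of the two equality constraints, after substituting the power formula, gives $\sum_i l_i x_i^*\big[\frac{\omega_i}{\lambda_1 l_i}-\frac{1}{e_i}\big]^+ = I$ and $\sum_i x_i^* = 1$, which together with the per-user stationarity equations constitute the claimed system for $\tilde x_i,\lambda_1,\lambda_2$; sufficiency is automatic from convexity once feasibility, stationarity, and complementary slackness are verified.

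The main obstacle I anticipate is the careful treatment of the boundary of the feasible set, where the objective fails to be smooth: the map $x\mapsto x\log(1+pe/x)$ extends continuously to $x=0$ with value $0$ and is concave on $\{x\ge 0,\,p\ge 0\}$, but it is not differentiable along $\{x=0\}$ (its one-sided derivative along a ray $p=qx$ is $\omega\log(1+qe)$, depending on $q$), so at zero-bandwidth users the stationarity conditions must be read as one-sided/subgradient inequalities, and one has to check these are precisely what the $[\cdot]^+$ operators encode. A clean workaround is to observe that any user with $x_i^*=0$ can be deleted without changing the optimum (set its rate and budget use to zero), apply smooth KKT on the reduced problem in which all $x_i>0$, and reinstate the deleted users through the projections in the final formulas. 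Two minor points worth stating explicitly along the way are that $\lambda_1>0$ (so dividing by $\lambda_1 l_i$ is legitimate) and that $e_i,l_i>0$, $\omega_i\ge 0$ under the standing model assumptions.
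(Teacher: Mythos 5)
Your proof is correct and follows essentially the same route as the paper's: both apply the KKT conditions to the convex program via the Lagrangian, with stationarity in $p_i$ yielding the water-filling formula and stationarity in $x_i$ yielding the transcendental equation, the equality constraints closing the system. The only difference is one of care rather than substance --- the paper simply introduces multipliers $\mu_i,\tilde\mu_i$ for the sign constraints and asserts the result from $\nabla L=0$ plus complementary slackness, whereas you additionally verify Slater's condition, argue $\lambda_1>0$, and flag (correctly) the non-differentiability of $x\mapsto x\log(1+pe/x)$ at $x=0$, details the paper leaves implicit.
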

\begin{proof}
In the proof of Proposition \ref{pr:convexity} we show the convexity of the optimization problem. The proposition will now follow from a straightforward application of the KKT conditions \cite[Section 5.5.3]{boyd2004convex}. Namely, we write
\begin{multline*}
\lefteqn{ L\left(\boldsymbol{p}, \boldsymbol{x}, \lambda_1, \lambda_2, \{\mu_i\}_{i \in \cM}, \{\tilde{\mu}_i\}_{i \in \cM}\right) } \\
= \sum_{i \in \cM} \omega_{i} x_i \log\left(1+\frac{p_i e_i}{x_i}\right) + \lambda_1\left(\sum_{i \in \cM} l_i p_i - I\right)
+ \lambda_2\left(\sum_{i \in \cM} x_i - 1\right) - \sum_{i \in \cM} \mu_i x_i - \sum_{i \in \cM} \tilde{\mu}_i p_i
\end{multline*}
and the proposition follows by requiring $\nabla L = 0$ and that for all $i \in \cM$ we have$ \tilde{\mu}_i p_i =0$, $\mu_i x_i =0$, $\tilde{\mu}_i \ge 0$ and $\mu_i \ge 0$.
\end{proof}

\begin{remark}
A key result of the Noise Rise concept is that users with a high $l_i$ are \emph{less likely} to receive high power, as they may consume a significant share of the noise rise budget. These users can either be compensated by a larger bandwidth, or, in case they are superior in terms of channel statistics and weights, indeed receive the significant portion of the noise rise budget. Of course, the resulting bandwidth and power allocation is a function of all the parameters in the problem, and must be solved using the optimality equations above or the iterative algorithm we suggest below. Yet, to rectify the dependence of the powers on $\{l_i\}$, fix the bandwidth parameters $\{x_i\}$ and assume $w_i=1$ for all $i$. The resulting equation for $p_i$ is
\[
p_i = x_i\left( \frac{I+\sum_i \frac{x_i l_i}{e_i}}{l_i} - \frac{1}{e_i}\right).
\]
The first summand in the parenthesis can be viewed as the \emph{water level}. It is thus clear that in this case, a larger $l_i$ results in a lower water level, hence a lower $p_i$. However, note that the actual result depends on $x_i$, and this value is part of the optimization problem as well.
\end{remark}

While Proposition \ref{prop:optimal_analytic} gives necessary and sufficient conditions for power allocations $\boldsymbol{p}$ and bandwidth allocations $\boldsymbol{x}$ to be optimal, its direct computation is cumbersome, as the equations for both types of variables are intertwined. However, in the next sub-section, we show that the problem in (\ref{eq:optimization_problem_no_power}) can be solved optimally by a highly efficient iterative algorithm, which, unlike standard iterative optimization procedures, does not jointly optimize over all variables, but rather utilizes the fact that when separating the power variables from the bandwidth ones, each optimization problem has a relatively easy water-filling-like analytical solution.
\subsection{An Iterative Algorithm}
The important observation is as follows. Fixing the bandwidth variables $\{x_i\}_{i \in \cM}$, the resulting optimization problem is
\begin{equation} \label{eq:optimization_problem_only_powers}
\begin{aligned}
& \underset{\boldsymbol{p}}{\text{maximize}}
& &\sum_{i \in \mathcal{M}} \omega_i x_i \log \left( 1+\frac{p_i e_i}{x_i} \right)  \\
& \text{subject to:}
& &  p_i  \ge 0, \; \forall i \in \mathcal{M} \\
&&& \sum_{i \in \mathcal{M}} l_i  p_i = I
\end{aligned}
\end{equation}

The solution to this problem is the well-known \emph{water filling}, e.g., \cite[Example 5.2]{boyd2004convex}. Hence, it is easily solvable (note that the weights $\omega_{i} x_i$ and the noise rise constraints $l_i$ only serve as scaling factors, and do not change the essence of the problem in the \emph{separated case}). Fixing the power variables $\{p_i\}_{i \in \cM}$, on the other hand, results in a relatively similar optimization problem, which although involving an implicit equation for each $x_i$, is also straightforward to solve. The iterative algorithm will then alternate between the two solutions, fixing one set of variables based on the results of the previous iteration. A pseudo code of the algorithm follows.

\begin{center}
\begin{minipage}{1pt}
\begin{algorithm}{Iterative-Water-Filling}{\boldsymbol{e},\boldsymbol{l},\boldsymbol{w}, I}
\boldsymbol{x} \= \boldsymbol{x^0}: \textrm{ such that }  x_i^0>0 \, \forall i, \sum_i{x_i^0}=1 \\
\begin{REPEAT}\\
    \lambda_1\= \CALL{Solve}\left(\sum_i l_i x_i\left[\frac{\omega_i}{\lambda_1 l_i}-\frac{1}{e_i} \right]^+ = I\right)\\
    p_i \= x_i\left[\frac{\omega_i}{\lambda_1 l_i}-\frac{1}{e_i} \right]^+ \, \forall i \\
    \lambda_2 , \boldsymbol{x} \= \CALL{Solve}\\
    \left(\sum_i x_i =1,
     \omega_i \log\left(1+\frac{p_i e_i}{x_i}\right) - \frac{p_i e_i\omega_i}{x_i+p_i e_i}+\lambda_2=0\right)
\end{REPEAT} \CALL{Converge} \\
\RETURN {\{\boldsymbol{x},\boldsymbol{p}\}}
\end{algorithm}
\end{minipage}		
\end{center}

%
%
When evaluating Algorithm {\scshape Iterative-Water-Filling}, the two key aspects are complexity and convergence. First, consider the number of operations in each iteration. The first step, as mentioned, is a basic water-filling procedure. The value of $\lambda_1$ can be calculated by first sorting the MSs according to their value of $\frac{l_i}{\omega_{i} e_i}$, then including MSs in ascending order until the ``water level" $\frac{1}{\lambda_1}$ satisfies the noise rise constraint. This is done in $O(M\log M)$. As for the second step, the solution is more involved, since it cannot be solved explicitly. However, as the following proposition states, the solution is monotonic in $\lambda_2$, with \emph{upper and lower bounds} on the value of the optimal $\lambda_2$, hence can be solved efficiently by a \emph{logarithmic time} binary search.

\begin{prop}\label{prop:solving_for_lambda_2}
For each $i \in \cM$, let $\tilde{x}_i$ be the solutions to
\begin{equation}\label{eq:compute_lambda_2}
\omega_i \log\left(1+\frac{p_i e_i}{x_i}\right) - \frac{p_i e_i \omega_i}{x_i + p_i e_i} + \lambda_2 =0
\end{equation}
Then, for all $i\in \cM$, every $\lambda_2 \leq 0$ and any $\omega_{i}\ge 0, p_i\ge 0$  and $e_i\ge 0$, we have:
\begin{enumerate}
\item $\tilde{x}_i \ge 0$  and $\sum_{i \in \cM} \tilde{x}_i$ is monotonically increasing in $\lambda_2$.
\item The value of $\lambda_2$ in (\ref{eq:compute_lambda_2}) such that $\sum_i \tilde{x}_i =1$ satisfies $\lambda_2^{min} \leq \lambda_2 \leq \lambda_2^{max}$, where

\begin{eqnarray*}
&& \lambda_2^{min} = \min_{i \in \cM} \left\{ \omega_{i} \left( \frac{Mp_i e_i}{1+Mp_i e_i} - \log\left( 1+Mp_i e_i\right) \right) \right\}
\\
&& \lambda_2^{max} = \max_{i \in \cM} \left\{ \omega_{i} \left( \frac{p_i e_i}{1+p_i e_i} - \log\left( 1+p_i e_i\right) \right)\right\}
\end{eqnarray*}

\end{enumerate}
\end{prop}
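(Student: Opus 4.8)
The plan is to analyze the scalar equation \eqref{eq:compute_lambda_2} for a single fixed index $i$, viewing $\tilde{x}_i$ as an implicit function of $\lambda_2$, and then sum over $i$. Write $g_i(x) = \omega_i \log(1 + p_i e_i / x) - \frac{p_i e_i \omega_i}{x + p_i e_i}$, so that the defining equation is $g_i(\tilde{x}_i) = -\lambda_2$. First I would establish that $g_i$ is a nonnegative, continuous, strictly decreasing function of $x$ on $(0,\infty)$: differentiating, $g_i'(x)$ should simplify to something manifestly negative (up to the degenerate cases $p_i=0$ or $e_i=0$, where everything is trivially zero), and $g_i(x)\to 0$ as $x\to\infty$ while $g_i(x)\to\infty$ as $x\to 0^+$ — this can be seen by the substitution $u = p_i e_i/x$, under which $g_i = \omega_i(\log(1+u) - \frac{u}{1+u})$, and the elementary inequality $\log(1+u) \ge u/(1+u) \ge 0$ for $u\ge 0$ gives nonnegativity, with strict monotonicity in $u$. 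Since $-\lambda_2 \ge 0$ lies in the range of $g_i$, the equation has a unique solution $\tilde{x}_i \ge 0$, and because $g_i$ is decreasing, $\tilde{x}_i$ is increasing in $-\lambda_2$ decreasing, i.e.\ increasing in $\lambda_2$; summing over $i$ gives part (1).

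For part (2), I would exploit monotonicity of the map $\lambda_2 \mapsto S(\lambda_2) := \sum_i \tilde{x}_i(\lambda_2)$: if I can exhibit values $\lambda_2^{min}, \lambda_2^{max}$ with $S(\lambda_2^{min}) \le 1 \le S(\lambda_2^{max})$, then the unique $\lambda_2$ solving $S(\lambda_2)=1$ lies between them. The natural idea is to use the extreme feasible configurations of a single coordinate. When all the mass is on one coordinate, $\tilde{x}_j = 1$ for some $j$ and the others are $0$, which forces $-\lambda_2 = g_j(1) = \omega_j(\log(1+p_j e_j) - \frac{p_j e_j}{1+p_j e_j})$; taking the maximum over $j$ of this quantity (equivalently the minimum of $g_j(1)$, but with the sign flip it becomes the stated $\lambda_2^{max}$ expression, which is $\le 0$) is the threshold beyond which $S$ exceeds $1$. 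Symmetrically, the most ``spread-out'' single coordinate has $\tilde{x}_j = 1/M$; since $g_j$ is decreasing, $g_j(1/M) = \omega_j(\log(1+Mp_j e_j) - \frac{Mp_j e_j}{1+Mp_j e_j}) \ge g_j(1)$, and choosing $-\lambda_2$ to equal the minimum over $j$ of these values (equivalently $\lambda_2 = \lambda_2^{min}$ as stated) makes every $\tilde{x}_i \ge 1/M$, hence $S(\lambda_2^{min}) \ge 1$ — wait, that points the wrong way, so the correct reading is that at $\lambda_2^{min}$ we have $\tilde x_i \le 1/M$ for all $i$ (by pushing $-\lambda_2$ up to the \emph{maximum} such threshold), giving $S \le 1$. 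I would pin down the direction carefully by tracking: larger $-\lambda_2$ $\Rightarrow$ smaller each $\tilde x_i$ $\Rightarrow$ smaller $S$.

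The main obstacle I anticipate is getting the inequality directions and the min/max bookkeeping exactly right, since there are three sign flips stacked on top of each other ($\lambda_2$ vs.\ $-\lambda_2$, $g_i$ decreasing, and min vs.\ max over $i$), and the two bounds are not symmetric — $\lambda_2^{min}$ uses $Mp_ie_i$ with a $\min_i$ while $\lambda_2^{max}$ uses $p_ie_i$ with a $\max_i$. The clean way to avoid confusion is to prove the two sandwiching claims directly: (a) at $\lambda_2 = \lambda_2^{max}$, for the maximizing index $j$ we get $\tilde{x}_j = 1$ exactly and all other $\tilde x_i \ge 0$, so $S(\lambda_2^{max}) \ge 1$; (b) at $\lambda_2 = \lambda_2^{min}$, for every $i$ the value $-\lambda_2^{min} = \max_k g_k(1/M) \ge g_i(1/M)$, so by monotonicity $\tilde x_i \le 1/M$, whence $S(\lambda_2^{min}) \le M \cdot (1/M) = 1$. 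Combining (a), (b) with part (1) and continuity of $S$ (inherited from continuity of each $g_i^{-1}$) yields the claimed bracketing. The degenerate cases $p_i e_i = 0$ contribute $\tilde x_i$ unconstrained by their own equation (the equation becomes $\lambda_2 = 0$); I would handle these by noting they only arise on a measure-zero boundary and, if needed, treat $g_i \equiv 0$ so that such coordinates are only active when $\lambda_2 = 0$, consistent with both bounds reducing appropriately.
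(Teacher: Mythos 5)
Your proposal is correct and follows essentially the same route as the paper's proof: part (1) via the substitution $u = p_i e_i/x$ reducing to the monotone nonnegative function $\log(1+u) - \frac{u}{1+u}$, and part (2) by using monotonicity of $\sum_i \tilde{x}_i$ in $\lambda_2$ together with the test values $\tilde{x}_i = 1$ (for the upper bound) and $\tilde{x}_i = 1/M$ (for the lower bound). Your final sandwiching statements (a) and (b) resolve the sign bookkeeping in the correct direction, matching the paper's argument.
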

Thus, when solving the bandwidth iteration in Algorithm {\scshape Iterative-Water-Filling}, simply compute $\lambda_2^{min}$ and $\lambda_2^{max}$, and apply a binary search for the value of $\lambda_2$ such that $|\sum_i x_i -1| \leq \epsilon$. The computational cost is $O(M \log(1/\epsilon))$ assuming (\ref{eq:compute_lambda_2}) is solved for $x_i$ in $O(1)$ for fixed $\lambda_2$.

\begin{proof}
To prove item 1) we proceed as follows.

Set $\frac{p_i e_i}{x_i} = \alpha$ and consider the function $f(\alpha) = \log(1+\alpha) - \frac{\alpha}{1+\alpha}$. It is easy to verify that $f(0)=0$ and that $f(\alpha)$ is non-negative and monotonically increasing for any $\alpha \ge 0$. Thus, for any $\frac{\lambda_2}{\omega_{i}} < 0$, the equation $f(\alpha) = -\frac{\lambda_2}{\omega_{i}}$ will have a unique solution at some $\alpha = \frac{p_i e_i}{x_i} > 0$. Hence, $\tilde{x}_i \ge 0$. Moreover, for fixed $p_i, e_i, \omega_{i}$, the smaller $\lambda_2$ is, the larger is the solution to $f(\alpha) = -\frac{\lambda_2}{\omega_{i}}$, that is, the smaller $\tilde{x}_i$, for all $i$, and hence the smaller is $\sum_i \tilde{x}_i$.

To prove item 2) we note the following.

Since $\sum_i \tilde{x}_i$ is monotonically increasing in $\lambda_2$, the value of $\lambda_2$ such that $\sum_i \tilde{x}_i = 1$ is clearly upper bounded by the value of $\lambda_2$ for which \emph{the ``weakest" MS, the MS which requires the largest $\lambda_2$} in order to achieve $\tilde{x}_i = 1$, indeed gets it (since in this case all other MSs will have $\tilde{x}_{i'} > 1$ and $\sum_i \tilde{x}_i$ will clearly surpass $1$). Thus, $\lambda_2$ is at most $\max_{i \in \cM} \left\{ \lambda_2 \quad \text{s.t.} \quad \tilde{x}_i =1\right\}$, which equals to
\[
\max_{i \in \cM} \left\{\omega_{i} \left( \frac{p_i e_i}{1+p_i e_i} - \log\left( 1+p_i e_i\right) \right)\right\}.
\]
On the other hand, the value of $\lambda_2$ such that $\sum_i \tilde{x}_i = 1$ is clearly lower bounded by the value of $\lambda_2$ for which the ``strongest" MS, the MS which requires the minimal $\lambda_2$ in order to have $\tilde{x}_i = 1/M$, indeed gets it (since in this case all other MSs will have $\tilde{x}_{i'} < 1/M$ and thus $\sum_i \tilde{x}_i$ will be strictly smaller than $1$). As a result, the optimal $\lambda_2$ is at least
$\min_{i \in \cM} \left\{ \lambda_2 \quad \text{s.t.} \quad \tilde{x}_i =1/M\right\}$, which equals to
\[
\min_{i \in \cM} \left\{\omega_{i} \left( \frac{Mp_i e_i}{1+Mp_i e_i} - \log\left( 1+Mp_i e_i\right) \right)\right\}.
\]
\end{proof}

Finally, we mention two important results on the convergence of Algorithm {\scshape Iterative-Water-Filling}. As mentioned, the algorithm iteratively solves two optimization problems, each one involving half of the parameters to be optimized (either powers or bandwidth allocations). This is an alternating optimization procedure. While this procedure can fail for some utility functions (e.g., $x^2-3xy+y^2$ when alternating between the optimization on $y$ and the optimization on $x$), it is important to note that in the specific case of our joint power and bandwidth scheduling with noise-rise constraint, it is bound to converge.
\begin{corollary}\label{cor:convergence}
Assume Algorithm {\scshape Iterative-Water-Filling} is used to solve (\ref{eq:optimization_problem_no_power}). Then the power assignments $\bf p$ and channel allocations $\bf x$ converge to the global optimum of (\ref{eq:optimization_problem_no_power}).
\end{corollary}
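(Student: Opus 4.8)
The plan is to read Algorithm~{\scshape Iterative-Water-Filling} as exact two-block coordinate ascent (nonlinear Gauss--Seidel) on the concave program (\ref{eq:optimization_problem_no_power}), and to argue that its only possible limit points are the allocations characterised as optimal by Proposition~\ref{prop:optimal_analytic}. Write $U(\boldsymbol{x},\boldsymbol{p})=\sum_{i\in\cM}\omega_i x_i\log(1+p_ie_i/x_i)$, extended continuously by setting the $i$-th term to $0$ when $x_i=0$, and let the feasible region be the Cartesian product $\mathcal{X}\times\mathcal{P}$ with $\mathcal{X}=\{\boldsymbol{x}\ge 0:\sum_i x_i=1\}$ and $\mathcal{P}=\{\boldsymbol{p}\ge 0:\sum_i l_i p_i=I\}$. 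The first step is to note that $\mathcal{X}\times\mathcal{P}$ is a product of two compact convex sets and that, by Proposition~\ref{pr:convexity}, $U$ is concave and continuous on it; hence a maximiser exists and the optimal value $U^\star$ is finite.

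The second step is to confirm that one pass of the {\scshape REPEAT} loop consists of two \emph{exact} block maximisations. The power update returns the maximiser of $U(\boldsymbol{x},\cdot)$ over $\mathcal{P}$ --- the water-filling problem (\ref{eq:optimization_problem_only_powers}), whose solution is $p_i=x_i[\omega_i/(\lambda_1 l_i)-1/e_i]^+$ with $\lambda_1$ set so that $\sum_i l_i p_i=I$; since $U(\boldsymbol{x},\cdot)$ is strictly concave in each active power variable, this maximiser is unique. The bandwidth update returns the maximiser of $U(\cdot,\boldsymbol{p})$ over $\mathcal{X}$: by Proposition~\ref{prop:solving_for_lambda_2}, $\sum_i\tilde x_i$ is strictly increasing in $\lambda_2$, so the stationarity system has a unique solution; the bounds $\lambda_2^{min}\le\lambda_2\le\lambda_2^{max}$ make the binary search well-posed; and the balancing $\lambda_2$ is strictly negative, so the returned $\boldsymbol{x}$ has all coordinates strictly positive. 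Together with the initialisation $x_i^0>0$, this keeps every iterate in the region where $U$ is continuously differentiable. Uniqueness of the two block maximisers makes the induced one-iteration map $T:\mathcal{X}\times\mathcal{P}\to\mathcal{X}\times\mathcal{P}$ continuous, by Berge's maximum theorem.

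The third step is the convergence argument itself. Each sub-step is non-decreasing in $U$, so $U(\boldsymbol{x}^{(t)},\boldsymbol{p}^{(t)})$ increases and, being bounded above on a compact set, converges to some $U^\infty\le U^\star$. By compactness the iterates have a limit point $(\boldsymbol{x}^\infty,\boldsymbol{p}^\infty)$; passing to the corresponding subsequence, continuity of $T$ gives $T(\boldsymbol{x}^\infty,\boldsymbol{p}^\infty)$ as the limit of the next iterates, and continuity of $U$ forces $U(T(\boldsymbol{x}^\infty,\boldsymbol{p}^\infty))=U^\infty$. Writing $\boldsymbol{p}'$ for the vector produced by the power step at $\boldsymbol{x}^\infty$, the squeeze $U^\infty\le U(\boldsymbol{x}^\infty,\boldsymbol{p}')\le U(T(\boldsymbol{x}^\infty,\boldsymbol{p}^\infty))=U^\infty$ together with uniqueness of the power maximiser give $\boldsymbol{p}'=\boldsymbol{p}^\infty$; the same reasoning for the bandwidth step shows $\boldsymbol{x}^\infty$ is the bandwidth maximiser at $\boldsymbol{p}^\infty$. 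Hence $(\boldsymbol{x}^\infty,\boldsymbol{p}^\infty)$ is a fixed point of $T$, so it satisfies all the equations of Proposition~\ref{prop:optimal_analytic} --- the power step supplies $p_i^\star=x_i^\star[\omega_i/(\lambda_1 l_i)-1/e_i]^+$ and $\sum_i l_i p_i^\star=I$, the bandwidth step supplies $\sum_i x_i^\star=1$ and the $\lambda_2$-equation. Since those conditions are necessary \emph{and sufficient} for optimality of (\ref{eq:optimization_problem_no_power}), we conclude $U^\infty=U^\star$, every limit point is a global optimum, and $\{(\boldsymbol{x}^{(t)},\boldsymbol{p}^{(t)})\}$ converges to the set of optimal allocations (to the unique optimum when it is a single point).

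The step I expect to be the genuine obstacle is the passage from block-wise optimality to global optimality: this is false for general objectives --- the $x^2-3xy+y^2$ example alternates forever between block optima without reaching one --- and it succeeds here only because $U$ is concave (Proposition~\ref{pr:convexity}) and the constraints decouple into a Cartesian product, so the joint first-order conditions split into the two block conditions we have established (equivalently, a fixed point of $T$ lands exactly on the optimality system of Proposition~\ref{prop:optimal_analytic}). The supporting technicalities to handle with care are: uniqueness of the block maximisers, used for continuity of $T$, which holds by strict concavity of $U$ within each block once degenerate ($e_i=0$ or $\omega_i=0$) users are set aside; the non-smoothness of $U$ on the faces $x_i=0$, sidestepped by the fact that the iterates stay strictly interior in $\boldsymbol{x}$; and the fact that $U$ is not strictly \emph{jointly} concave (its Hessian from Proposition~\ref{pr:convexity} vanishes along directions with $\alpha_{2i-1}p_i/x_i=\alpha_{2i}$), so the honest conclusion is convergence of the objective value to $U^\star$ and of the iterates to the optimal set, which coincides with pointwise convergence exactly when the optimum is unique.
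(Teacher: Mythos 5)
Your proposal is correct and rests on the same two pillars as the paper's proof: (i) each block sub-problem (powers for fixed bandwidths, bandwidths for fixed powers) has a unique maximiser, and (ii) a point that is simultaneously optimal in both blocks --- a fixed point of the iteration map $T$ --- must be a global optimum of (\ref{eq:optimization_problem_no_power}) because the objective is concave and the constraints form a Cartesian product. The difference is in execution: the paper verifies these two properties and then delegates the convergence statement itself to \cite[Theorem 2]{bezdek2002some}, whereas you prove it from first principles --- compactness of $\mathcal{X}\times\mathcal{P}$, continuity of $T$ via Berge's maximum theorem, monotonicity of the objective along the iterates, and the squeeze argument identifying every limit point as a fixed point of $T$. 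Your version buys self-containedness and a more honest conclusion: since $U$ is not strictly jointly concave, what one actually obtains is convergence of the objective value to $U^\star$ and of the iterates to the set of optima, with pointwise convergence only when the optimum is unique; the paper addresses this distinction only informally, in the discussion following the corollary. One caution applies to both arguments equally: the implication ``fixed point of $T$ implies global optimum'' genuinely requires differentiability of $U$ at the limit point, and a point where some user has $x_i=0$ and $p_i=0$ is blockwise optimal (power given to a zero-bandwidth user contributes nothing to $U$, and vice versa) without being stationary for joint perturbations. You flag this by asserting that the iterates stay strictly interior in $\boldsymbol{x}$, but that claim is delicate --- a user driven to $p_i=0$ by the water-filling step makes equation (\ref{eq:compute_lambda_2}) degenerate and should receive $x_i=0$ in the exact bandwidth maximisation --- so a fully rigorous treatment would need to rule out such lock-out limit points or show they coincide with the support of the optimum. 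The paper's proof is silent on this point altogether, so your explicit acknowledgement of it is an improvement rather than a defect.
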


\begin{proof}
We first show that the negative utility function
\[
-\sum_{i \in \cM} \omega_{i} x_i \log\left(1+ \frac{p_i e_i}{x_i}\right)
 \]
satisfies an existence and uniqueness constraint \cite[Section 2]{bezdek2002some}. That is, fixing any $2M-1$ variables and optimizing on the remaining variable, the resulting problem has a unique (global) minimizer in the range. This is easily seen from the Hessian matrix calculated in the proof of Proposition~\ref{prop:optimal_analytic}, as the negative utility function is convex in each of the variables.

We now mention that the transform $(\bf p, \bf x)_t = T[(\bf p, \bf x)_{t-1}]$ defined by one iteration of Algorithm {\scshape Iterative-Water-Filling} has no fixed points (for which $T[\bf v] = \bf v$) besides the global optimum of (\ref{eq:optimization_problem_no_power}). This is, again, since the negative utility function is convex. The corollary will now follow by applying \cite[Theorem 2]{bezdek2002some}.
\end{proof}

Note that the result in \cite[Theorem 2]{bezdek2002some} implies that as long as the power and bandwidth allocations $\bf p, \bf x$ are not in the range of the \emph{fixed points} of $T[\bf p, \bf x]$ (an iteration of the algorithm), the negative utility function is strictly decreasing in each iteration. Moreover, the result in Corollary~\ref{cor:convergence} can be made even stronger if the negative utility function is \emph{strictly convex} in the range. Consider, for example, the two MS case. While the function
\[
- \omega_1 x_1 \log\left(1+ \frac{p_1 e_1}{x_1}\right)- \omega_2 x_2 \log\left(1+ \frac{p_2 e_2}{x_2}\right)
\]
is not strictly convex for all $p_1,p_2,x_1,x_2$ (the Hessian matrix is not of full rank and hence not positive definite), \emph{it is strictly convex} under the constraints $x_1+x_2 =1$ and $l_1 p_1 + l_2 p_2 =I$ (for positive $x_i$ and $p_i$). In this case, by \cite[Theorem 3]{bezdek2002some}, the alternating optimization in Algorithm {\scshape Iterative-Water-Filling} \emph{converges $q$-linearly} to the global optimum, from any starting point $(p_1,p_2,x_1,x_2)^0$ in the range. That is, each iteration of the algorithm decreases the distance to the global optimum by a constant multiplicative factor $q \in [0,1)$. In other words,
$
||(\bf p, \bf x)_t - (\bf p^*, \bf x^*)|| \leq q ||(\bf p, \bf x)_{t-1} - (\bf p^*, \bf x^*)||
$.

The algorithm performance, as well as interesting insights on the structure of the utility function we use, are easily visible in the following two-users example.
\begin{example}
Consider the following two-users example. Set $x_1=x=1-x_2$ and $p_1 = p = \frac{I-l_2 p_2}{l_1}$. To fix the constants, set $I=4$, $l_1=4$ and $l_2=1$, so $p_2 = 4 - 4p$ and hence both $x$ and $p$ are in the range $[0,1]$. Of course, this reduces the generality of the problem, yet as it turns out, still results in non-trivial solutions. The optimization problem is thus to find the pair $(x,p)$ which maximize
\[
w_1 x \log\left(1 + \frac{e_1 p}{x}\right) + w_2 (1 - x) \log\left(1 + \frac{e_2(4 - 4p)}{1 - x}\right)
\]
The results for the weights $w_1=1.1$, $w_2 = 9.4$, $e_1=16.25$ and $e_2=0.1$ are given in Figure \ref{fig:two_users}. Note that the constant were chosen to best illustrate a non-tirivial maximum point and the convergence to it. The maximum is achieved at $(x,p)=(0.667419,0.315038)$. This maximum point reflects the balance between the higher interference user 1 creates and its lower weight yet much better channel conditions. This is the reason user 1 receives a significant bandwidth allocation. The three dashed lines represent three runs of the iterative algorithm, from three different starting points. 10 iterations suffice to converge. Moreover, it is clear that a given variable may increase at one iteration, and decrease at the next. That is, the algorithm ``corrects" the bandwidth variables according to the resulting powers, and vice versa.
\end{example}

\begin{figure}
\centering
\includegraphics[scale=0.3]{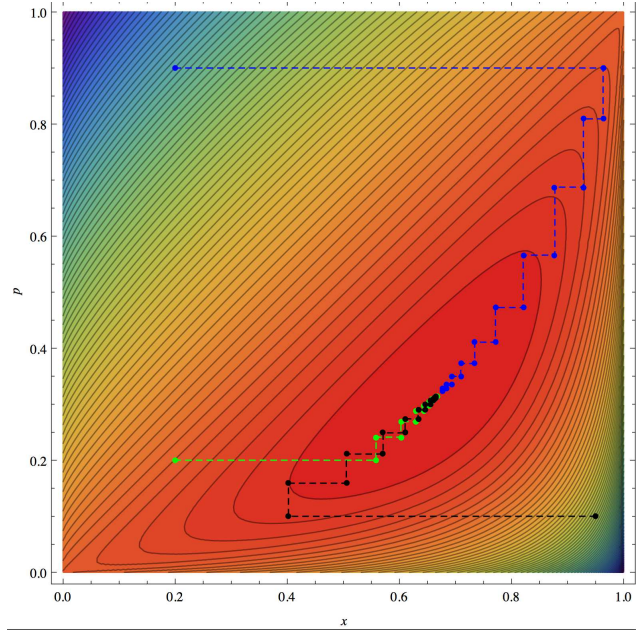}
\caption{A two-users example. A graphical representation of the sum of the wighted capacities as a function of both the bandwidth allocated, $(x,(1-x))$ and the powers distributed, $(p_1=p, \frac{I-l_1 p}{l_2})$. The constants are $I=4$, $l_1=4 l_2 =4$, $w_1=1.1$, $w_2 = 9.4$, $e_1=16.25$ and $e_2=0.1$. On top, the three dashed lines give the results of the iterative algorithm for three different starting points.}
\label{fig:two_users}
\end{figure}

\section{Constrained noise rise density} \label{sec:heuristicA}
In this section, we consider a second noise rise approach, where the \textit{noise rise density} is constrained. That is, instead of noise rise budget over the whole frequency band, the noise rise per resource unit is constrained. We will see that the constraint noise rise approach renders the scheduling and power allocation problems separate, facilitating a very efficient implementation. Moreover, in the separate problems, there is no direct dependence on the Shannon capacity expression and its mathematical properties (e.g., convexity). In fact, the constraint noise rise density approach will be applicable under any \emph{Rate-Adaptation} mechanism, regardless of the actual function connecting the power used to the rate achieved. For example, consider a complicated, real-life scenario where various aspects such as modulations, re-transmissions and error control mechanisms affect the \emph{de-facto} achieved transmission rate. In this case, the Shannon capacity may be far from capturing the actual rates, however, it is possible to devise a function, even if mathematically intractable, that connects the power or bandwidth used to the achieved rate. The approach described herein, will be able to use any such rate-adaptation in the optimization process. Finally, we also mention that this approach is also suitable for the contiguous resource allocation scheme (i.e., without any sub-band partitioning and permutations) and the localized transmission scheme of the LTE SC-FDMA uplink allocation. Both schemes are beneficial for supporting frequency-selective scheduling.

Formally, the constraint on the noise rise density results from normalizing the allocated power by the allocated bandwidth, for each MS $i$. That is, $l_i \cdot \frac {p_i} {x_i} \le I$, for all $i\in \mathcal{M}$.
First, re-writing as $l_i p_i \le I x_i $ and summing both sides over all $i\in \mathcal{M}$, it is clear a resource and power allocation $\left\{ \boldsymbol{x}, \boldsymbol{p} \right\}$ that complies with the noise rise density constraint will \emph{not exceed the noise rise constraint over the whole frequency band}. Namely, the noise rise density approach only adds constraints to the original problem discussed in Section \ref{sec:optimization}.
The scheduling problem with constrained noise rise density is hence as follows.
\begin{equation} \label{eq:density_optimization_problem}
\begin{aligned}
& \underset{\boldsymbol{x} , \boldsymbol{p}}{\text{maximize}}
& & \sum_{i \in \mathcal{M}} \omega_i (t) x_i \log \left( 1+\frac{p_i e_i}{x_i} \right) \\
& \text{subject to:}
& & x_i,p_i \geq 0, \; \forall i \in \mathcal{M} \\
&&& \sum_{i \in \mathcal{M}} x_i \leq 1 \\
&&&  l_i \cdot \frac {p_i} {x_i} \le I, \; \forall i : x_i>0
\end{aligned}
\end{equation}
Now, consider the expression $\omega_i (t) \log \left( 1+\frac{p_i e_i}{x_i} \right)$. Since for each $i$, $\frac{p_i}{x_i} \leq \frac{I}{l_i}$, it is clear that a user which maximized $\omega_i (t) \log \left( 1+\frac{I e_i}{l_i} \right)$ should be allocated the \emph{entire bandwidth}.
The following pseudo code summarizes our solution to the resource allocation problem in this case.
{{
\begin{center}
\begin{minipage}{1pt}
\begin{algorithm}{Resource-Allocation}{\boldsymbol{l},\boldsymbol{w}, I}
\boldsymbol{x} \= \boldsymbol{0}, \boldsymbol{p} \= \boldsymbol{0}\\
\begin{FOR}{i \in \mathcal{M}}
	r_i \= \CALL{Rate-Adaptation}(\frac{I}{l_i})
	\end{FOR}\\
i^* \= \mathop {\arg} \mathop {\max }\limits_{i \in \mathcal{M}} {\left\{ \omega_i \cdot r_i \right\}}, \quad
x_{i^*} \= 1, \quad
p_{i^*} \= \frac {I} {l_{i^*}}\\
\RETURN {\{\boldsymbol{x},\boldsymbol{p}\}}
\end{algorithm}
\end{minipage}		
\end{center}
}}

Algorithm {\scshape Resource-Allocation} is made possible since our assumption on bounded noise rise {\textit{density}} allows the computation of the uplink transmission power density regardless of the schedule. The transmission power density is sufficient for the rate adaptation. Clearly, in practice, a dynamic rate adaptation mechanism would better estimate the achievable rate and facilitate better utilization of the multi-MS diversity resulting in a higher throughput. Algorithm {\scshape Resource-Allocation} is significantly easier to implement, and, as will be made clear in the next section, results in throughput and fairness which do not fall much short than the optimal algorithm.

\section{Simulation results} \label{sec:sim}
In this section, we present the results of thorough simulations carried out to test the performance and behavior of the discussed algorithms, and compare them to existing allocation techniques. First, we include the simulation results of a direct, Shannon-capacity based approach to performance evaluation, where the total throughput measured is simply the capacity of a Gaussian, band limited channel, with the scheduled bandwidth and power, and subject to the noise and interference created by neighboring cells. This simulation clearly illustrates the benefits of the suggested algorithms. Then, to better emulate practical systems, and to assess the algorithm performance under such constraints, we give the results of a thorough simulation, including modulation, coding, packet looses, re-transmissions, and several other practical issues which, while complex to evaluate analytically, can be included in a simulation and give a practical view of the results in this paper.
\subsection{Numerical Results - Shannon Capacity}
In this sub-section, we give the numerical results for the algorithms in Section~\ref{sec:optimal} and Section~\ref{sec:heuristicA}, and compare them to a fixed power scheme. Our main figure of merit is the \emph{total throughput} in the system, summed over all frames and all cells. However, to stress out the benefits of the noise-rise based schemes, the standard deviation of the interference seen at the cells as well as the actual power used are also given.
Moreover, a noise rise density approach \emph{with power constraints} is also simulated.

The deployment included 72 hexagon-shaped cells (base stations), with 722 MSs (MS stations). The MS locations were drawn uniformly at random in space, with the restriction of a minimum of 2 MSs per cell.
Path losses (in dB) were calculated according to the COST-Hata model \cite[Capter 4, equation 4.4.3]{COST-Hata}, specifically:
\[
\begin{split}
  PL =  & \, 46.3 + 33.9 \log_{10}(f) - 13.82 \log_{10}(h_B) - ah_m \\
    &   +(44.9 -6.55\log_{10}(h_b))\log_{10}(d) +c_m
\end{split}
\]

where $f=2000 MHz$ was the transmission frequency, $h_B=50$ the BS antenna effective height, $ah_m$ the MS antenna height correction factor and $c_m$ was taken as $0$ to model a medium size cite. The simulation included 80 frames, with topology and path gains fixed throughout.

The algorithm used to solve the optimal, noise-rise constraint problem (\ref{eq:optimization_problem}) was Algorithm {\scshape Iterative-Water-Filling} (termed Noise Rise in the Figures). For the noise-rise density problem (Section~\ref{sec:heuristicA}), Algorithm {\scshape Resource-Allocation} was used (termed Noise Rise Density). Moreover, a version of Algorithm {\scshape Resource-Allocation} which accounts for a \emph{constant max-power constraint} is also given. That is, we consider the user $i^*$ which maximizes $\omega_i (t) \log \left( 1+\frac{I e_i}{l_i} \right)$ and let $P_{max}$ be the power headroom. If $P_{max} \ge \frac{I}{l_{i^*}}$, we allocate the entire bandwidth to $i^*$. Otherwise, we set $x_i^* = \frac{P_{max}l_{i^*}}{I}$, and allocate the remaining bandwidth to the second best user, and so on. If all users reach their power constraint before all bandwidth is allocated, the algorithm distributes the remaining bandwidth proportionally. Note, however, that this \emph{does not} necessarily result in the optimal solution under these constraints.

In the fixed power scheme we compare with, again, only a single MS per cell is scheduled. However, in this scheme, the allocated power is fixed at the same constant $P$ for all MSs, and only the MS which under such an allocation will have the maximal normalized rate $\omega_{i} \log\left(1+ \frac{Pe_i}{x_i}\right)$ is scheduled. Of course, for a fair comparison, the fixed power $P$ is set such that the mean interference caused by all schemes is the same.

The weights $\omega_i$ were initialized to a constant value, and were updated after each frame according to \cite{KeMa98},
\[
\omega_i(t)= 1/T_i(t),
\]
and
\[
T_i(t)= T_i(t-1) + (1-\beta) \cdot B_i(t-1)
\]
with $\beta =0.9 $ and $B_i(t-1)$ being the number of bits delivered to MS $i$ at sub-frame $t-1$ (0 if MS $i$ was not scheduled at frame $t-1$).

In all four schemes, the actual throughputs after each time slot were calculated according to the SINR measured at the receivers with correspondence to the actual MSs scheduled at this time slot (Shannon capacity for band-limited Gaussian channel). This is to \emph{simulate the exact scenario for which our analytic claims apply}. Note, however, that this is fundamentally different from the all-encompassing simulation described in the next sub-section, where specific modulations and packet loss rates are taken into account.

The results are given in Figures \ref{fig:throughput_std} and \ref{fig:histograms}. In Figure \ref{fig:throughput_std}(a), the total throughput, over all cells and frames is given. The benefit of both noise-rise schemes over the fixed power scheme is clear for all noise-rise values (equivalently, all SNR values). The benefit of the optimal algorithm over the sub-optimal is also clear, as it indeed usually schedules \emph{more than one MS per cell}. Note though, that it is possible to construct topologies in which the benefit is small.

A key advantage in the noise-rise schemes is that the interference seen at the neighboring cells is concentrated around the fixed noise-rise value $I$, allowing them to plan modulation and coding accordingly. Clearly, when the variance of the actual interference observed at the receivers is high, it is harder to choose the appropriate modulation and coding, forcing the senders to either aim at lower modulation and coding schemes or suffer high packet loss. This variance is given in Figure \ref{fig:throughput_std}(b). While the average interference is constrained to the same value in all schemes, the two noise-rise schemes (the optimal and the sub-optimal algorithms) exhibit significantly lower standard deviation. That is, the interference seen under these schemes is centered around the average value, allowing better rate planning by the BSs.

To address the lack of maximum power constraint (headroom) in the optimal (Noise Rise) scheme, the four sub-figures in Figure \ref{fig:histograms} depict the histograms of the actual powers allocated to the MSs. It is clear that a substantial fraction of the MSs is allocated powers \emph{below or around} that allocated by the fixed power scheme, and even the small fraction with high powers \emph{does not require a consequential increase in power}. Thus, although the optimal scheme is not head room constrained, its actual power usage is moderate.

\begin{figure}
\begin{center}
\subfigure[Average throughput]{
\includegraphics[scale=0.5]{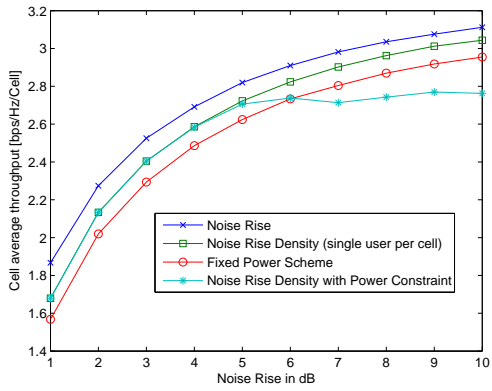}
}
\subfigure[Standard deviation]{
\includegraphics[scale=0.49]{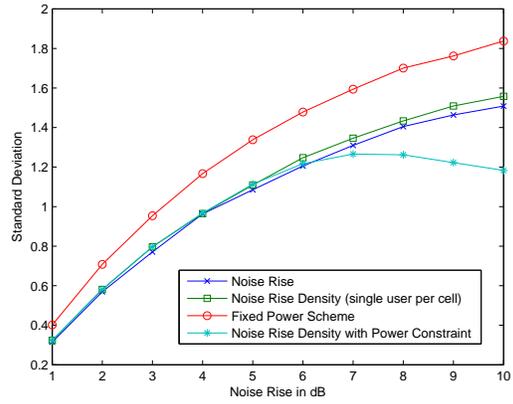}
}
\caption{(a)
Average throughput per cell per frame. (b) Standard deviation of the interference.
\label{fig:throughput_std}}
\end{center}
\end{figure}
\begin{figure}
\centering
\subfigure[2 dB Noise rise]{
   \includegraphics[scale =0.46] {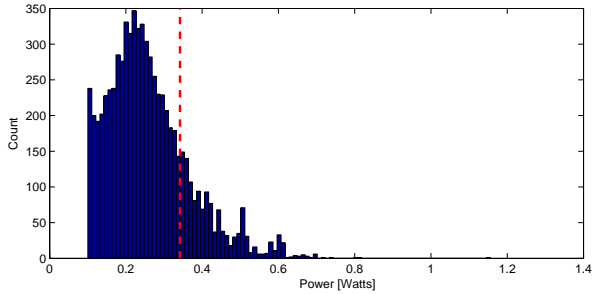}
 }
 \subfigure[5 dB Noise rise]{
   \includegraphics[scale =0.46] {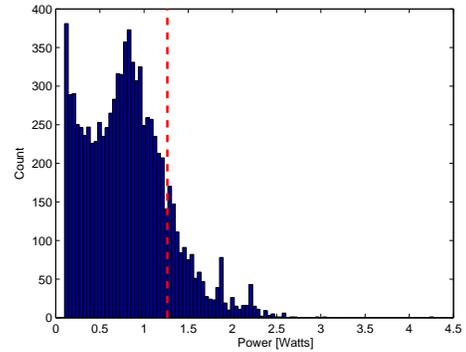}
 }
 \subfigure[7 dB Noise rise]{
   \includegraphics[scale =0.46] {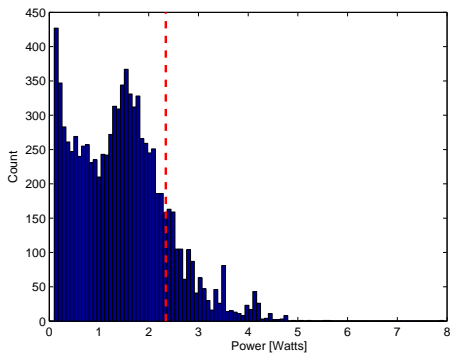}
 }
 \subfigure[10 dB Noise rise]{
   \includegraphics[scale =0.46] {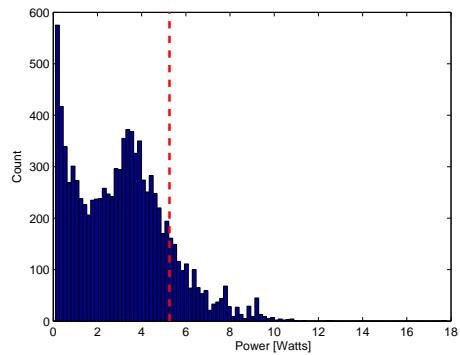}
 }
\caption{Histograms of the allocated powers in the optimal algorithm (only powers above 0.1 are included). The dashed red line represents the power allocated by the fixed power scheme.
\label{fig:histograms}}
\end{figure}

\subsection{An IMT-Advanced Simulation} \label{se:simIMT}

In this section we illustrate the advantages of the noise rise approach via an extensive system simulation, which includes various practical aspects such as finite number of resource units, a more realistic channel model which includes log-normal shadowing model, modulation and coding selection, packet losses, HARQ (Hybrid ARQ) and retransmissions. That is, the system simulation includes more practical aspects that we did not analyze, and did not account for in the numerical results. However these aspects are evident in practical systems and should not be discarded.

To this end, we developed a Matlab simulator that complies with the guidelines for evaluation of radio interface technologies for IMT-Advanced \cite{ITUM2135}.
Specifically, we simulate a two tier hexagonal deployment with 19 sites, each containing three cells (sectors) operating in a frequency reuse 1. Each cell takes its scheduling decisions independently. MSs are located uniformly across the deployment with $10$ MSs in each cell. Statistics are collected from all cells, facilitated by a wrap around geometry such that all cells suffer from interference from two tiers neighboring cells.
Following the IMT-Advanced guidelines, we implemented the WINNER II \cite{WINNERII} stochastic channel modeling.
The modeling approach is based on the geometry of the network layout. The large-scale parameters such as path loss and shadow fading are generated according to the geometric positions of the BS and the MS. Then the statistical channel behavior is defined by distribution functions of delay and angular profiles.
We consider the urban macro-cell scenario, where typically MSs are located outdoors at street level and BSs are fixed clearly above surrounding building heights.  Accordingly, the log-normal shadow fading is assumed with $4dB$ standard deviation for {\em LoS (Line of Sight)} MSs (which are chosen probabilistically as a function of the distance) and $6dB$ for {\em NLoS (Non-LoS)} MSs \cite{ITUM2135}. The assumed inter-site distance is $500m$.
We assume a single user MIMO, with two receive antennas at the BS and one transmit antenna at the MS. The MS and the BS antennas gain are $0$ and $17dBi$, respectively. The assumed noise figure at the MS and at the BS are $7$ and $5dB$, respectively.

We consider the IEEE 802.16m uplink frame structure ({\it Time Domain Duplexing - TDD} mode). The assumed total bandwidth is $10MHz$, occupied by $48$ resource units. The simulation runs sub-frame by sub-frame, performing the scheduling at the beginning of each sub-frame. We ran the simulation over $1000$ sub-frames for $3$ drops.
The {\it exponential-effective SINR Mapping} ({\it EESM}) approach \cite{EESM04} is used to map the system level SINR onto the link level curves to determine the resulting block error rate. Upon block errors, a synchronized retransmission is scheduled with the highest priority after $5$ sub-frames. Upon reception of retransmission the receiver performs chase combining. The maximal number of retransmissions is $4$.

We assume a full-queue model where all MSs have backlogged traffic.
The weights $\omega_i$ were initialized to a constant value, and were updated after each frame according to
\[
\omega_i(t)= 1/T_i(t),
\]
and
\[
T_i(t)= T_i(t-1) + (1-\beta) \cdot B_i(t-1)
\]
with various values of decay factor $\beta$.

The proposed noise rise schemes are compared with two traditional power control mechanisms, namely, (i) maximal transmission power, and (ii) target received SINR. For the first approach, an MS transmits at maximal level regardless of channel conditions, and the link adaptation process assigns the best modulation and coding scheme that maximizes the station throughput given its channel condition. The second SINR based power control approach aims at obtaining a required SINR level at the receiver assuming a fixed noise plus interference scenario. Here, the maximal transmission power and the target SINR for these schemes are set such that the average transmit power is identical to that of the noise rise based schemes.

First, in Figure \ref{fig:NRhistograms}, we compare the ingress noise rise level at the BSs with the four power control schemes. One can see that both the constrained noise rise scheme based on Algorithm {\scshape Iterative-Water-Filling} (termed N.R.) and the constrained noise rise density scheme based on Algorithm {\scshape Resource-Allocation} (termed N.R. density) obtain a relatively narrow histogram around the target noise rise (of 4 db). Alternatively, the maximal transmission power scheme (termed MaxP) and the target received SINR based scheme (termed SINR) result in a much wider histogram (especially the max power scheme). Such wide histograms corresponds to an unpredictably highly variant uplink interference. Note that the max power scheme involves a much larger noise rise, since it allocates more users in a frame, each transmitting at it maximal power over a narrower band.

\begin{figure}
\centering
\subfigure[N.R.]{
   \includegraphics[scale =0.46] {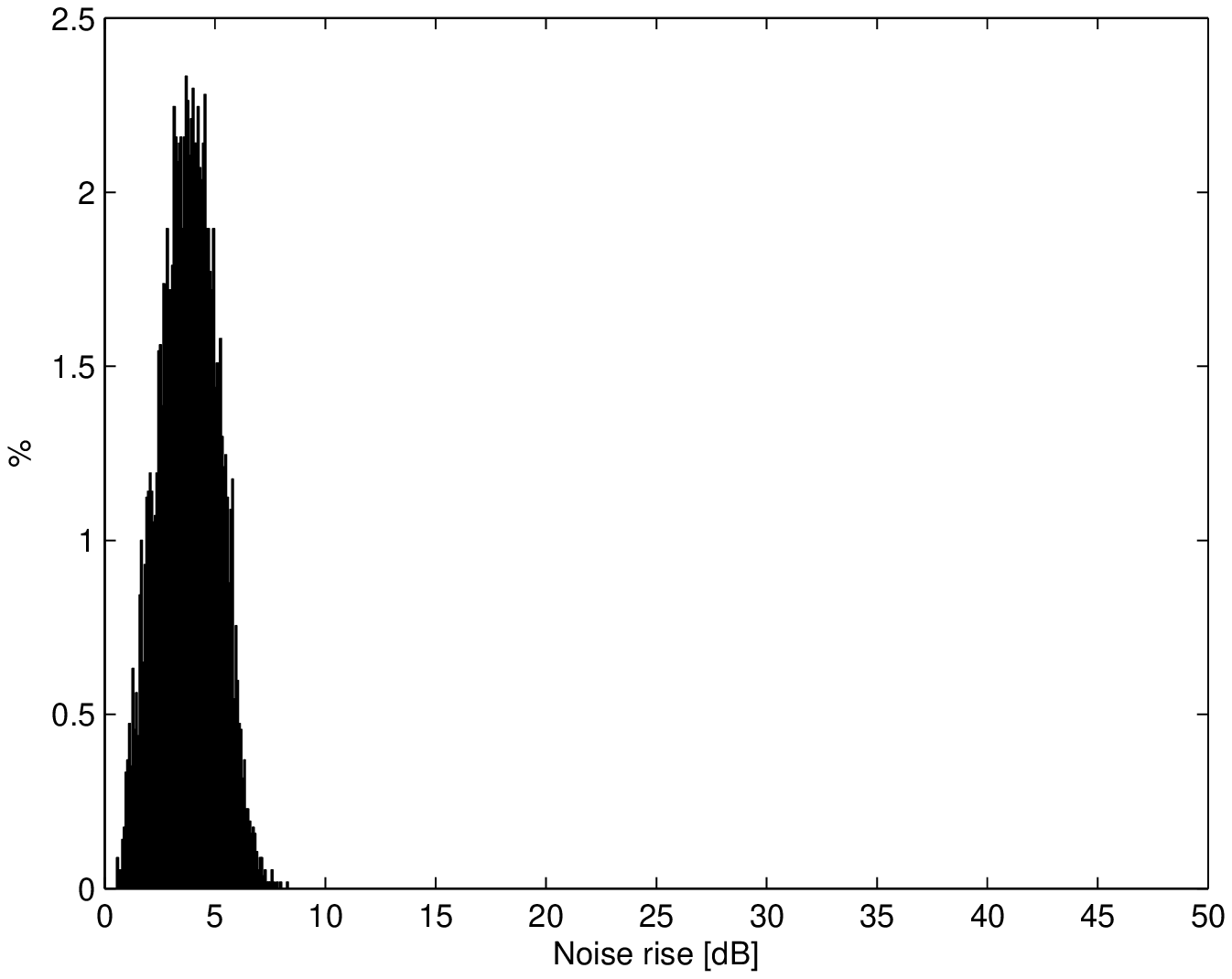}
 }
 \subfigure[N.R. density]{
   \includegraphics[scale =0.46] {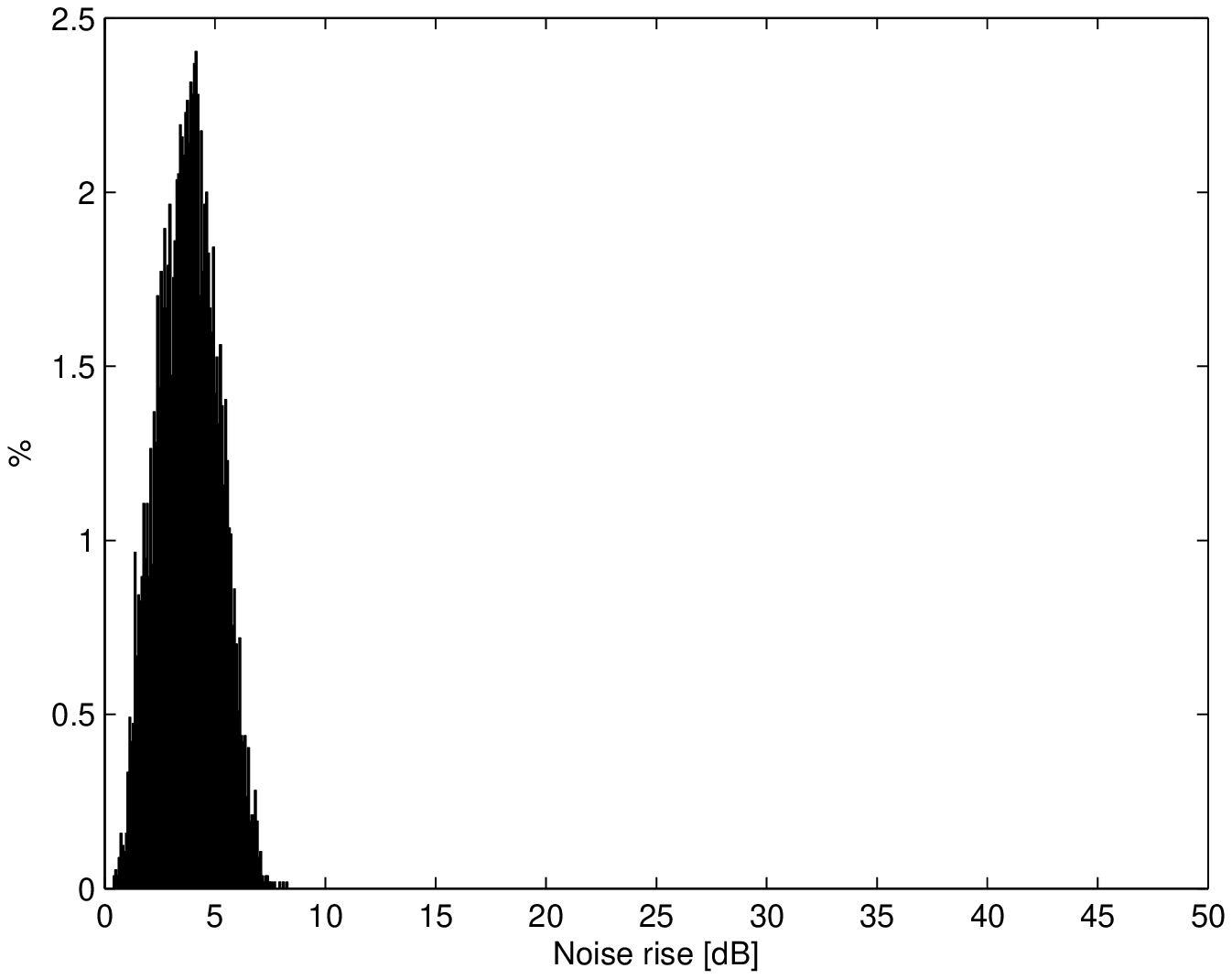}
 }
 \subfigure[MaxP]{
   \includegraphics[scale =0.46] {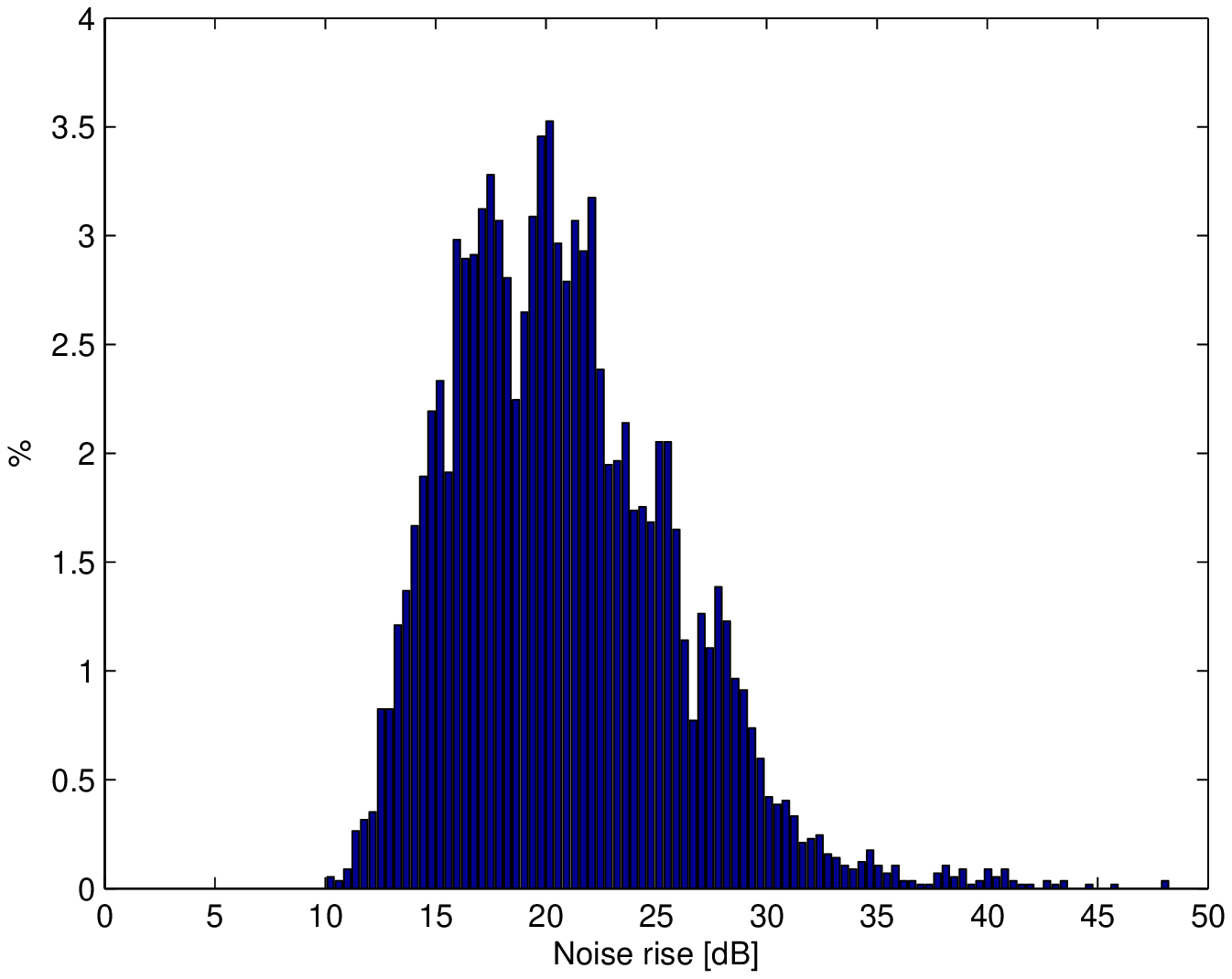}
 }
 \subfigure[SINR]{
   \includegraphics[scale =0.46] {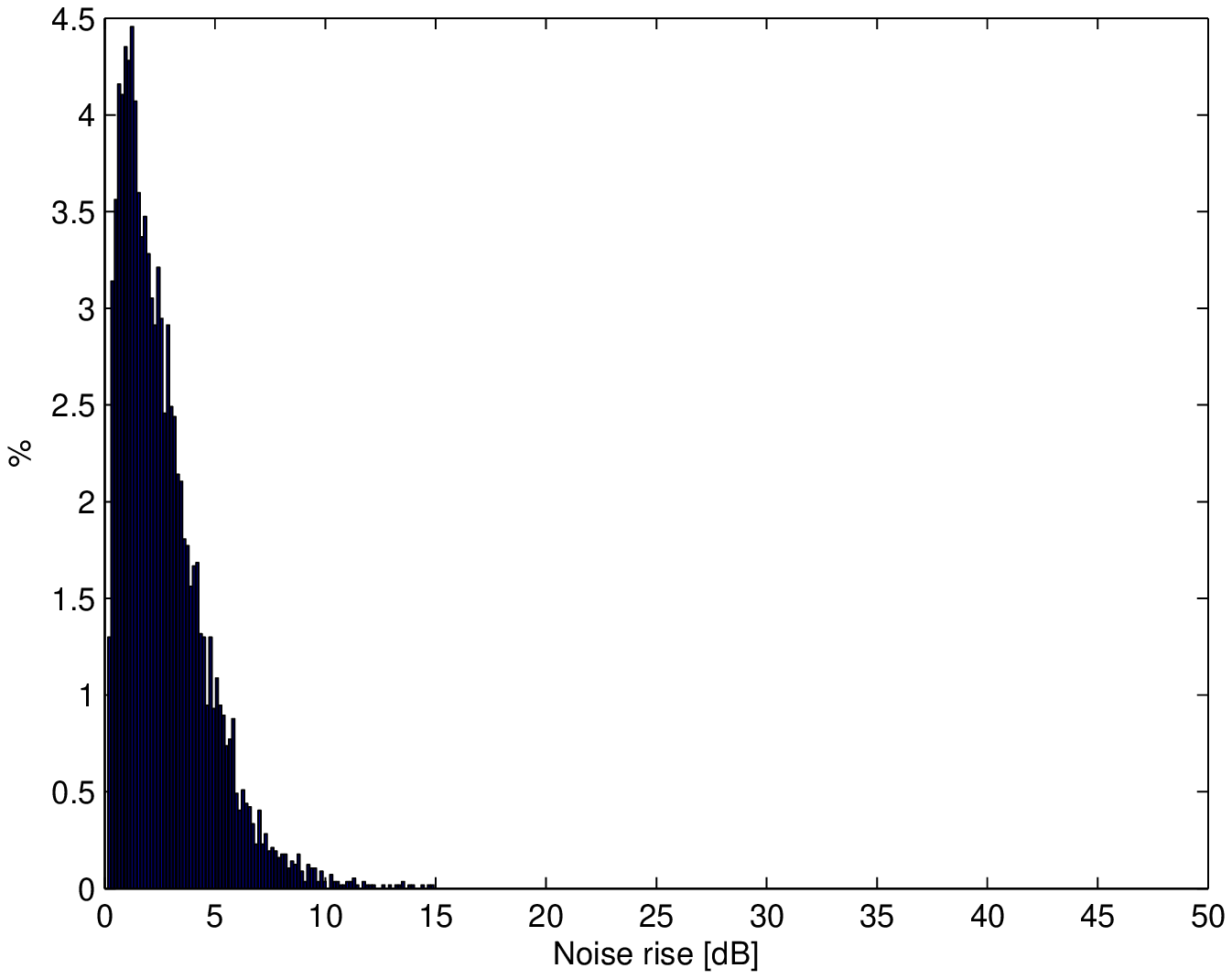}
 }
\caption{Histograms of the ingress Noise Rise in dB with the various power control schemes.
\label{fig:NRhistograms}}
\end{figure}
To compare the performance of the four schemes we adopt two commonly used indicators, namely, the cell and the cell edge MS spectral efficiency (in bits/Sec/Hz). The cell edge MS spectral efficiency is defined as the 5th percentile point of the \textit{CDF} (\textit{cumulative distribution function}) of MS spectral efficiency.

Figure \ref{fig:seBeta} depicts the uplink cell spectral efficiency as well as the cell edge MS spectral efficiency for various values of the proportional fair decay factor $\beta$. Indeed, the greedy approach of transmitting at maximal power obtains the highest throughput. However, it is at the expense of starvation of the cell edge MSs (Figure \ref{fig:seceBeta}). Alternatively, our noise rise schemes provide a better tradeoff between throughput and fairness. For example, the noise rise approach (N.R. in the figure) obtains cell spectral efficiency lower by 40\% from the cell spectral efficiency with the maximal power approach, yet it is 440\% higher than the spectral efficiency with the SINR based approach. Additionally, cell edge MSs with the noise rise approach gain 71\% more throughput than with the maximal power scheme and only 35\% less than the more fair SINR based approach. The schemes' fairness is further illustrated in Figure \ref{fig:se_cdf} that depicts the CDF of the MSs spectral efficiency and a zoom in on cell edge users (Figure \ref{fig:se_cdf_zoom}). Here, it is clear that the SINR based approach provides the best fairness, where all MSs get similar throughput. Clearly, fairness comes at the expense of total system spectral efficiency. On the other hand, the maximal power approach sacrifices about 25\% of the MSs by allocating them unacceptably low throughput (which in practice would result in high blocking probability). Again, the noise rise approach provides a good tradeoff between cell throughput and fairness.
Note that, even though we assumed 1 transmit antenna at the MS and 2 receive antennas at the BS, instead of $2 \times 4$, respectively, the obtained spectral efficiency is only slightly lower than the IMT-Advanced requirements \cite{ITUM2134} (e.g., cell spectral efficiency of \~1 bit/Sec/Hz instead of 1.4 bit/Sec/Hz).

It is interesting to see from Figure \ref{fig:seBeta} that at times the constrained noise rise density algorithm obtains better throughput than the noise rise scheme (e.g., for $\beta=0.6$). This is due to the difference between the simulated IMT advanced EESM channel model and the Gaussian Channel model, which is fundamental to the noise rise approach. Alternatively, the N.R. density approach decouples the scheduling and power control schemes, allowing a link adaptation that does not assume the Gaussian Channel model (see Section \ref{sec:heuristicA}).

\begin{figure}[htbp]
    \centering
    \subfigure[Cell spectral efficiency]{
   \includegraphics[scale =0.47] {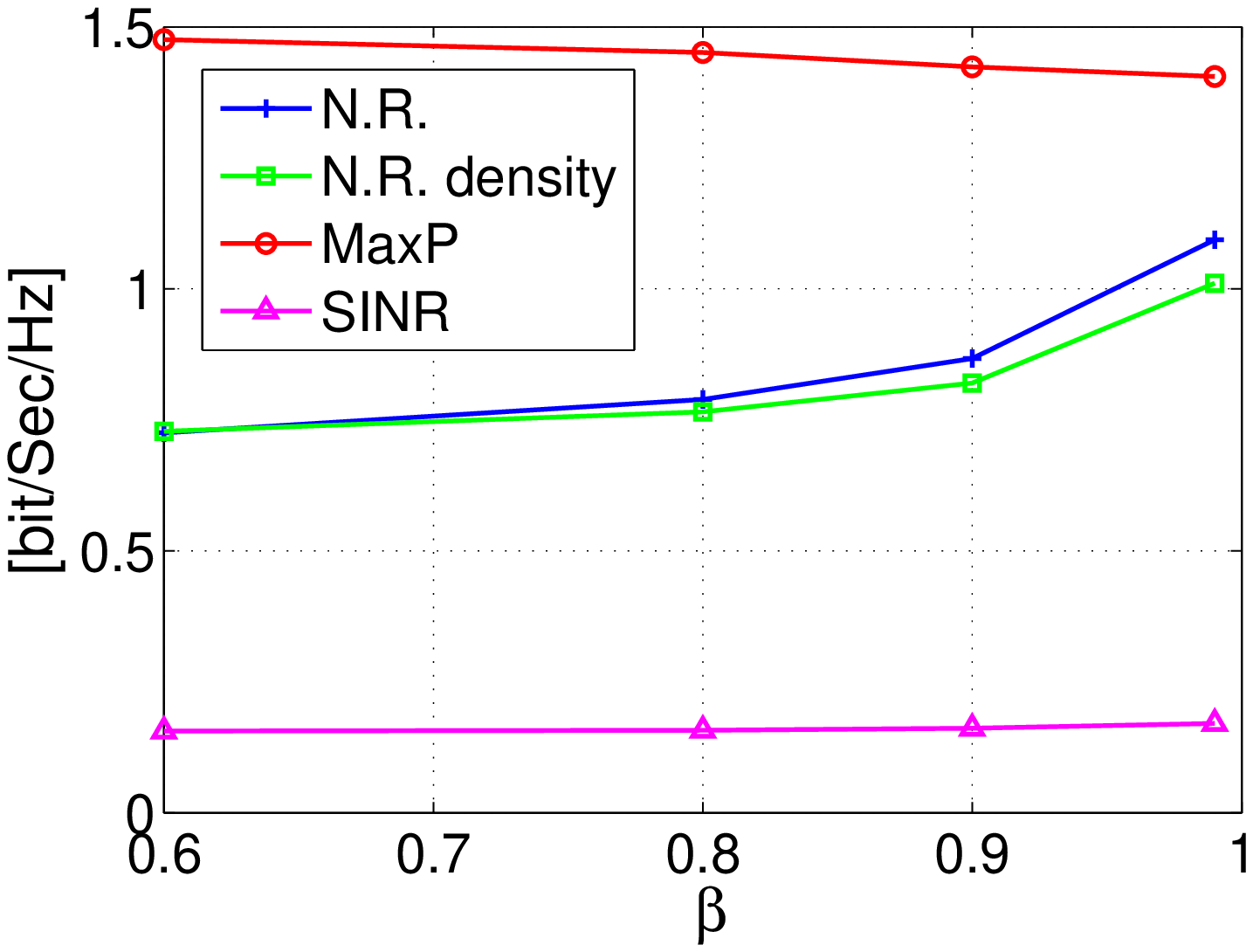}
 }
\subfigure[Cell edge spectral efficiency]{
   \includegraphics[scale =0.47] {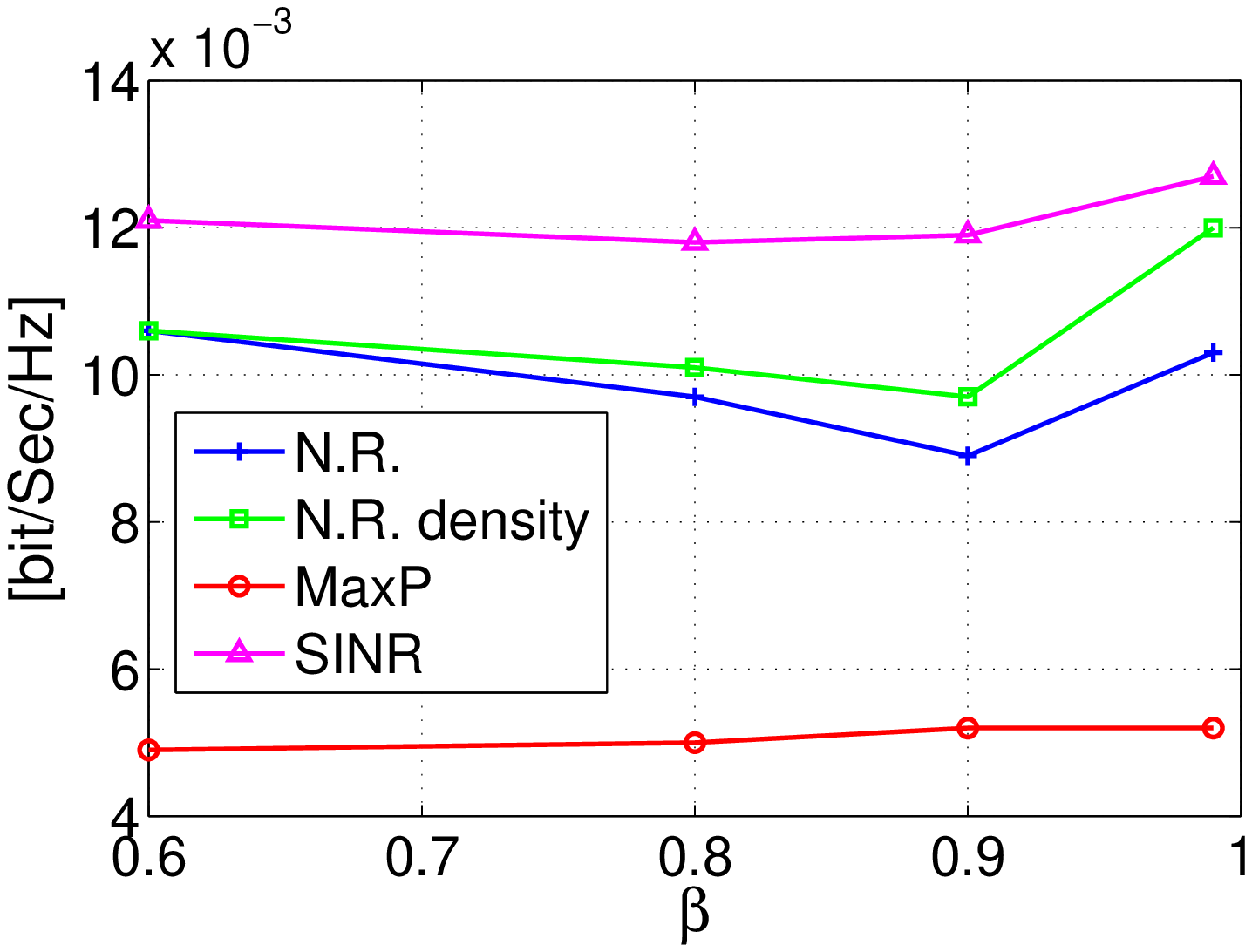}
   \label{fig:seceBeta}
 }
\caption{Uplink cell and cell edge MS spectral efficiency for various values of the proportional fair decay factor $\beta$.
\label{fig:seBeta}}
\end{figure}
\begin{figure}[htbp]
    \centering
    \subfigure[CDF]{
    \includegraphics[scale =0.47] {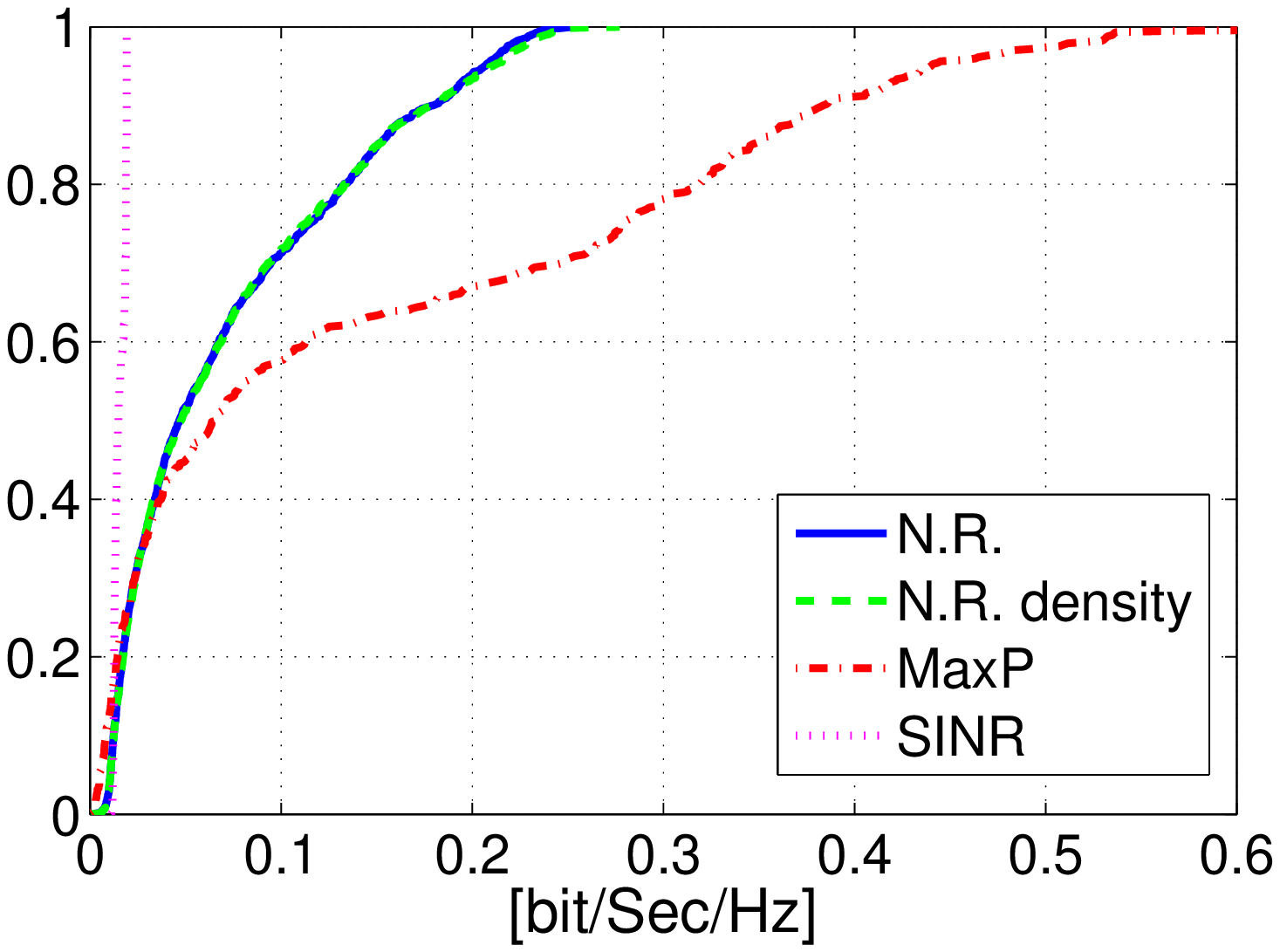}
    \label{fig:se_cdf_full}
 }
 \subfigure[Zoom in]{
    \includegraphics[scale =0.47] {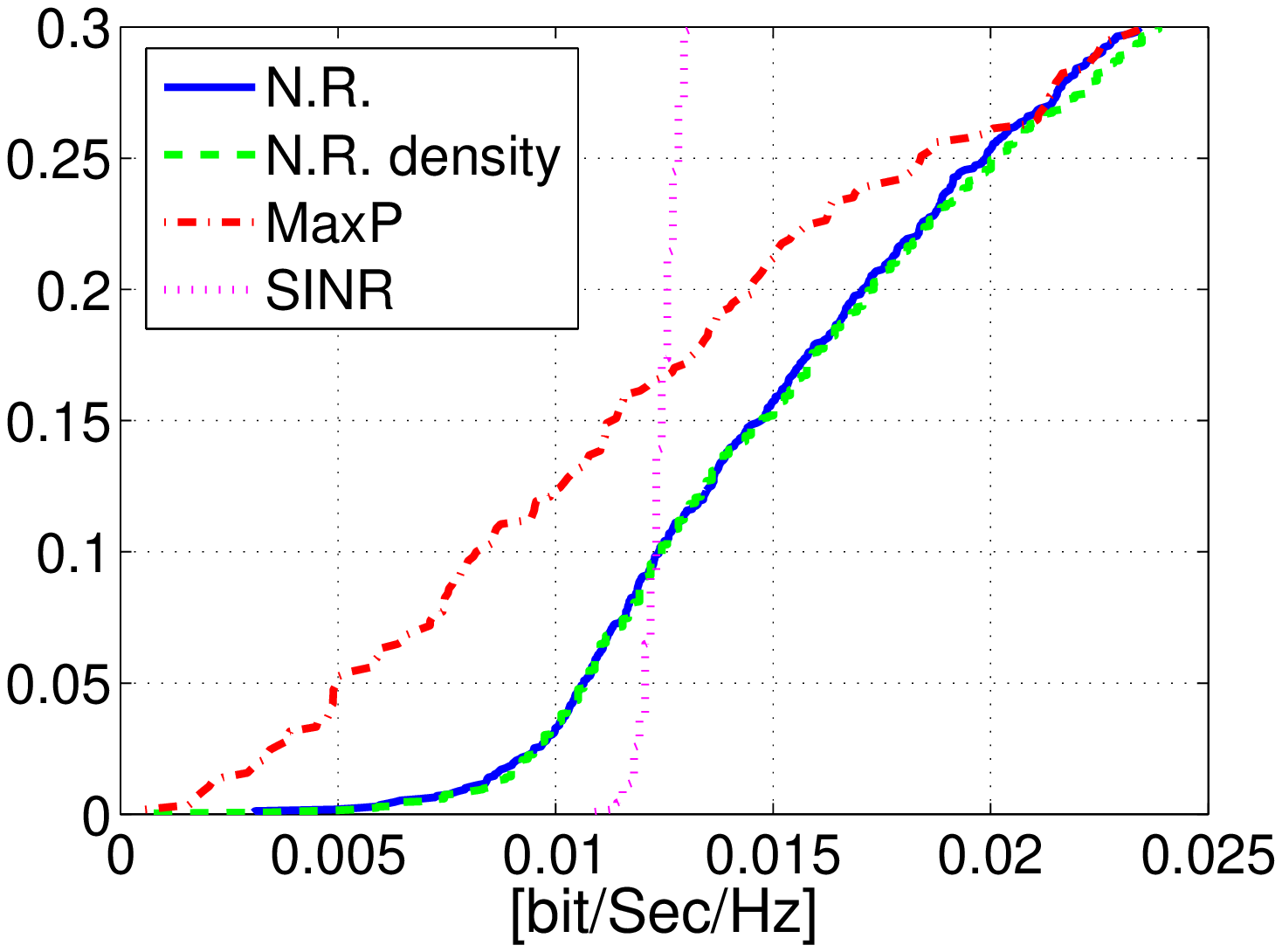}
    \label{fig:se_cdf_zoom}
 }
    \caption{The CDF of the MSs' spectral efficiency, comparing the fairness of the various power control schemes.} \label{fig:se_cdf}
\end{figure}
\begin{figure}
\centering
    \subfigure[Cell spectral efficiency]{
   \includegraphics[scale =0.47] {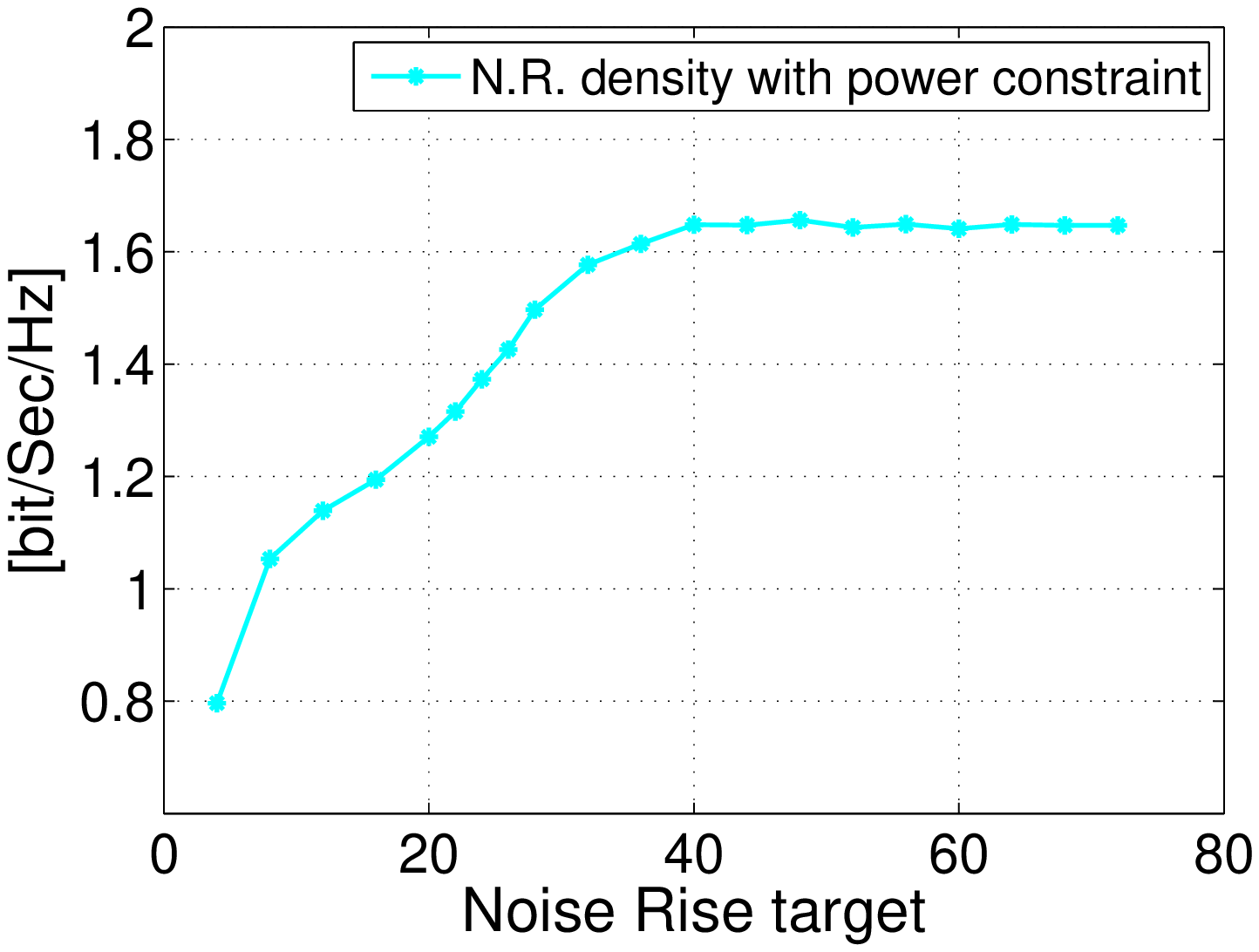}
 }
\subfigure[Cell edge spectral efficiency]{
   \includegraphics[scale =0.47] {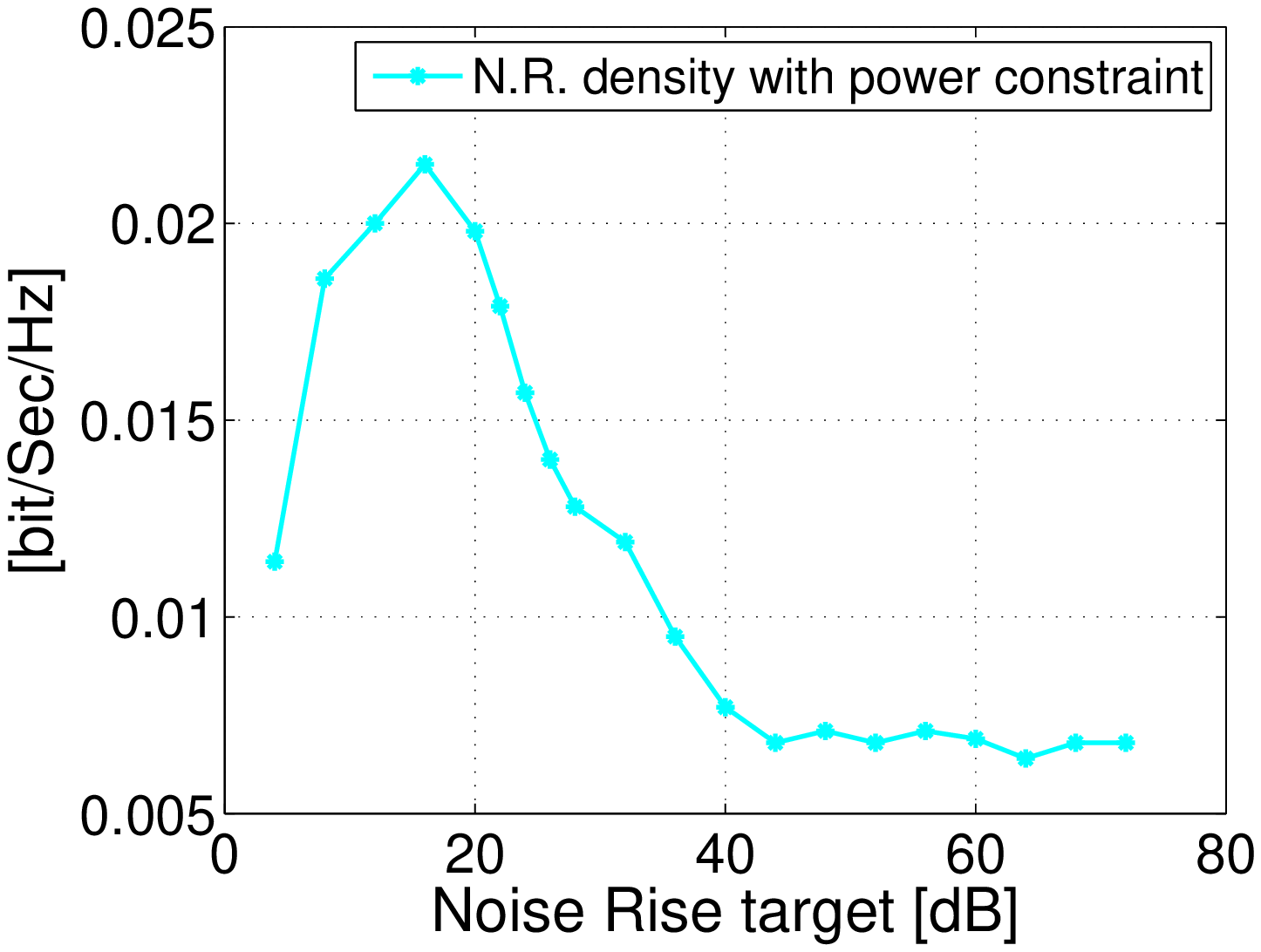}
   \label{fig:seceNR}
 }
\caption{Uplink cell and cell edge MS spectral efficiency for various values of noise rise target.
\label{fig:seNR}}
\end{figure}

Finally, we examine the performance of the noise rise approach with various values of noise rise target. Throughout this paper we assume that the transmission power is constrained by the target noise rise. However, this might not be the case if we increase the noise rise target. Accordingly, to examine high values of noise rise target, we incorporate an additional constraint on the MSs' transmission power. Here, we consider only the noise rise density scheme, where it is easier to incorporate such a constraint.

Following the IMT advanced requirement, we set the MS maximal uplink transmit power to $24 dBm$. Figure \ref{fig:seNR} depicts the cell throughput as well as the cell edge MS throughput for various values of noise rise target. As expected, cell throughput increases with the increase in noise rise. However, as soon as MSs at the cell edge are approaching their maximal power limitations, their throughput decreases. Clearly, at high values of noise rise target, power is constrained by the MS maximal transmission power and the power control operates as the maximal power scheme.

Finally, note that Figure~\ref{fig:seNR} emphasizes the tradeoff in determining the noise rise constraint \emph{a-priori}. On the one-hand, choosing a high noise-rise constraint provides high overall throughput. This, however, comes at the expense of the throughput attained by the cell edge users. On the other hand, selecting a low noise-rise budget will, in general, provide a high cell edge user's throughput, now at the expense of the overall attained throughput.
Of course, choosing a too low noise-rise constraint is detrimental both to the aggregate throughput as well as to the cell edge, since the noise plus interference experienced by the cell edge MSs is dominated by the noise (which remains at fixed level) yet the noise-rise budget is too low for the cell-edge MSs to attain high throughput even at low interference. For example, in the results depicted in Figure~\ref{fig:seNR}(b), edge user throughput is maximized at a noise target of $18dB$.

Accordingly, if the network manager is more concerned about aggregate throughput and less about fairness high noise rise constraint should be preferred. If on the other hand fairness is important lower noise-rise constraint should be selected. Obviously intermediate values which balance between aggregate throughput and cell-edge user throughput are also possible.

\section{Conclusion} \label{sec:conc}
In this paper, we considered a joint scheduling and power allocation problem. Specifically, to mitigate inter-cell interference, we suggested a novel approach, which considers the interference caused by MSs in a BS to neighboring BSs as a resource to be allocated, similar to bandwidth or power. The essence of this approach is as follows. Each MS, based on its channel gains, creates different interference to neighboring cells when transmitting. A BS, while allocating power and bandwidth to its subscribers, does so in a way such that the total aggregated interference its MSs generate does not pass a certain value - its noise rise budget.

We rigorously formulated the problem as a convex optimization problem with linear constraints and suggested an efficient iterative algorithm for its solution, based on known and new water-filling based solutions to its separate problems. We then devised a modified algorithm, which optimizes power and bandwidth allocations assuming the noise rise is constrained for each sub-channel (a portion of the bandwidth allocated to a single MS). This algorithm is highly efficient, attaining higher throughput to cell-edge MSs than other known techniques, while maintaining high overall throughput, approximating the performance of the optimal algorithm. These performance guarantees were also depicted in two extensive simulations, one based on analytical expressions and one based on a practical system built based on the IMT-Advanced specification.

Potential applications of the suggested algorithms include, but are not limited to, uplink scheduling in IEEE 802.16e/m, 3GPP LTE, and more.


\bibliographystyle{IEEE}
\bibliography{NoiseRise}

\end{document}